%% file: main.tex
\documentclass[a4paper,UKenglish,cleveref, autoref, thm-restate]{lipics-v2021}

\pdfoutput=1 
\hideLIPIcs  
\usepackage[linesnumbered,boxed,ruled,noend]{algorithm2e}
\usepackage{algorithmic}
\usepackage{booktabs}
\usepackage{tabularx}
\usepackage{tcolorbox}
\nolinenumbers
\usepackage{rotating} 
\newcommand{\lev}{\mathsf{lev}}
\newcommand{\plev}{\mathsf{plev}}
\newcommand{\cov}{\mathsf{cov}}
\newcommand{\asn}{\mathsf{asn}}
\theoremstyle{plain} \newtheorem{invariant}[theorem]{Invariant}

\title{Engineering Algorithms for Dynamic Greedy Set Cover} 

\author{Amitai Uzrad}{Tel Aviv University}{uzradamitai@gmail.com}{https://orcid.org/0009-0002-7519-0884}{}

\authorrunning{Amitai Uzrad}

\Copyright{Amitai Uzrad}

\ccsdesc[500]{Theory of computation~Dynamic graph algorithms}

\keywords{Dynamic graphs, set cover, recourse} 

\supplement{}

\supplementdetails[subcategory={Source Code}, cite={}, swhid={}]{Software}{https://github.com/AmitaiUzrad/DynamicSetCover}

\funding{Funded by the European Union (ERC, DynOpt, 101043159). Views and opinions expressed are however those of the author(s) only and do not necessarily reflect those of the European Union or the European Research Council. Neither the European Union nor the granting authority can be held responsible for them. This research was also supported by the Israel Science Foundation (ISF) grant No.1991/1.}

\acknowledgements{The author is sincerely grateful to Prof. Shay Solomon for providing a supportive research environment that made this work possible.}

\begin{document}

\maketitle

\begin{abstract}
In the dynamic set cover problem, the input is a dynamic universe of elements and a fixed collection of sets. As elements are inserted or deleted, the goal is to efficiently maintain an approximate minimum set cover. While the past decade has seen significant theoretical breakthroughs for this problem, a notable gap remains between theoretical design and practical performance, as no comprehensive experimental study currently exists to validate these results.

In this paper, we bridge this gap by implementing and evaluating four greedy-based dynamic algorithms across a diverse range of real-world instances. We derive our implementations from state-of-the-art frameworks—such as [GKKP(STOC'17); SU(STOC'23); SUZ(FOCS'24)]—which we simplify by identifying and modifying intricate subroutines that optimize asymptotic bounds but hinder practical performance. We evaluate these algorithms based on solution quality (set cover size) and efficiency, which comprises \emph{update time}---the time required to update the solution following each insertion/deletion---and \emph{recourse}---the number of changes made to the solution per update. Each algorithm uses a parameter $\beta$ to balance quality against efficiency; we investigate the influence of this tradeoff parameter on each algorithm and then perform a comparative analysis to evaluate the algorithms against each other. Our results provide the first practical insights into which algorithmic strategies provide the most value in realistic scenarios.

\end{abstract}

\newpage

\input{sec1}

\input{sec2}
\input{sec3}
\input{sec4}
\input{sec5}
\input{sec6}

\bibliography{ref}

\newpage

\appendix

\input{secappendix}

\end{document}

%% file: sec1.tex
\section{Introduction}
\subparagraph{Static Set Cover.}
The minimum set cover problem is a fundamental cornerstone of combinatorial optimization. As the canonical covering problem, it encapsulates several well-known variants, including minimum vertex cover, dominating set, and edge cover. Formally, given a universe $\mathcal{U}$ of $n$ elements and a family $\mathcal{F}$ of $m$ subsets of $\mathcal{U}$, the goal is to find a minimum-sized collection $\mathcal{C} \subseteq \mathcal{F}$ such that every element in $\mathcal{U}$ is covered by $\mathcal{C}$. The \emph{frequency} $f$ of the system is the maximum number of sets that each element in $\mathcal{U}$ can be contained in. Two classic approaches define the approximation landscape for this problem: a \emph{greedy} strategy yields roughly a ($\ln n$)-approximation, while a \emph{primal-dual} approach achieves an $f$-approximation. These bounds are widely believed to be tight; unless $\text{P} = \text{NP}$, no algorithm can achieve a $(1-\epsilon) \ln n$-approximation \cite{dinur2014analytical,williamson2011design}, and assuming the Unique Games Conjecture, the $f$-approximation is likewise optimal \cite{khot2008vertex}.

\vspace{-5pt}

\subparagraph{Dynamic Set Cover.}
Over the past decade, research on the \emph{dynamic} set cover problem has expanded rapidly \cite{abboud2019dynamic,assadi2021fully,bhattacharya2025fullydynamicsetcover,bhattacharya2017deterministic,bhattacharya2015design,bhattacharya2019new,bhattacharya2021dynamic,bukov2023nearly,gupta2017online,solomon2023dynamic,solomon2025dynamicsetcoverworstcase,10756164}. In this setting, the universe $\mathcal{U}$ is modified through a sequence of update steps; in each update step, an element is either inserted into or deleted from $\mathcal{U}$. Throughout this sequence, the family of sets $\mathcal{F}$ remains static, and the universe size is bounded by $|\mathcal{U}| \leq n$. Algorithms in this field are generally categorized into the \emph{low-frequency regime} ($f < \ln n$) and the \emph{high-frequency regime} ($f \geq \ln n$). The objective is to maintain a near-optimal approximation—near $f$ or near $\ln n$, respectively—while maximizing efficiency. The most common efficiency metric is the \emph{update time}, the time required to update the solution following each insertion/deletion. Additionally, there is growing research interest in the \emph{recourse} \cite{r2,r6,r9,r15,r16,r28,r30,r32,r35,r36,gupta2017online,r38,r39,r51,ShayNoam,solomon2023dynamic}, which measures the number of changes made to the maintained solution per update step, a metric of increasing importance in dynamic environments where stability is required. Both metrics can be analyzed in terms of their \emph{amortized} (average) or \emph{worst-case} (maximum) performance.

\vspace{-5pt}

\subparagraph{Previous Results in Theory.}
 While several works have dynamized the primal-dual approach for the \emph{low-frequency} regime \cite{abboud2019dynamic,assadi2021fully,bhattacharya2025fullydynamicsetcover,bhattacharya2017deterministic,bhattacharya2015design,bhattacharya2019new,bhattacharya2021dynamic,bukov2023nearly,gupta2017online,solomon2025dynamicsetcoverworstcase,10756164}, this paper focuses on the \emph{high-frequency} regime, where algorithms dynamize the greedy approach. \Cref{comp} summarizes the key theoretical results for this regime, focusing only on those with low update time.

\begin{table}[h!]
\centering
	\caption{Summary of low update time results for dynamic set cover in the high-frequency regime. Amortized results for update time and recourse are in black, worst-case results are in blue.}\label{comp}
	\begin{tabular}{|c|c|c|c|c|c|c|}
		\hline
		Reference	&	Approx.	Factor &	Update Time	&	Recourse	&	Comments \\\hline
        \cite{gupta2017online} STOC'17	&	$c \cdot \ln n$	&	$O(f\log n)$	&	-		&	$c > 1000$	\\\hline
		\cite{solomon2023dynamic} STOC'23	&	$\beta \cdot \ln n$	&	$O\left(\frac{f \log n}{(\beta - 1)^5}\right)$	&	$O\left(\frac{1}{(\beta - 1)^4}\right)$	&	for any $\beta > 1$\\\hline
		\cite{10756164} FOCS'24	&	$\beta \cdot \ln n$	&	\textcolor{blue}{$O\left(\frac{f \log n}{(\beta - 1)^2}\right)$}	&	-	&	for any $1 < \beta < 1.25$ \\\hline
		\cite{bhattacharya2025fullydynamicsetcover} STOC'26	&	$O(\log n)$	&	\textcolor{blue}{$O(f \log^3 n)$}	&	\textcolor{blue}{$O(\log n)$}	&	- \\\hline
        \cite{solomon2025dynamicsetcoverworstcase} '25	&	$2\beta \cdot \ln n$	&	\textcolor{blue}{$O\left(\frac{f \log n}{(\beta - 1)^2}\right)$}	&	\textcolor{blue}{$O\left(\frac{1}{\beta - 1}\right)$}	&	for any $1 < \beta < 1.25$ \\\hline	
	\end{tabular}
\end{table}

\vspace{-10pt}

\subparagraph{Previous Results in Practice.} 
The field of dynamic algorithm engineering has seen substantial practical progress in recent years, with breakthrough results for several problems such as maximum independent set, edge orientation, and many more (see \cite{DBLP:conf/acda/AngrimanM0U21,DBLP:conf/acda/BorowitzG023,DBLP:conf/alenex/BorowitzG025,DBLP:conf/alenex/GoranciHLSS21,DBLP:conf/esa/GrossmannRR0HV25,DBLP:conf/alenex/GrossmannR0W25,DBLP:conf/infocom/HanauerHOS23,DBLP:conf/wea/HanauerH020,DBLP:conf/alenex/HanauerH020,DBLP:journals/jea/HanauerHS22,DBLP:conf/esa/Henzinger0P020,DBLP:conf/alenex/HenzingerN022} and references therein). However, while experimental results exist for other settings of the set cover problem—such as static and streaming (see \cite{10.1145/1871437.1871501,GAO2015750,9877844,doi:10.1137/1.9781611972795.60,WANG2021476,Belma} and references therein)—the \emph{dynamic} setting remains entirely unexplored empirically. This absence of practical study is a significant oversight, as the complex structures used to achieve optimal theoretical bounds often perform differently under real-world workloads. This gap persists for two primary reasons: first, the major theoretical breakthroughs only occurred within the last decade; second, and more importantly, these algorithms are inherently difficult to implement, often relying on intricate subroutines designed solely to preserve asymptotic guarantees rather than practical performance.

\vspace{-5pt}

\subparagraph{Our Contribution.}

This work provides the first empirical study of dynamic set cover, bridging the gap between theoretical design and practical performance. We implement and evaluate four greedy-based algorithms across several real-world instances of varying scales. Our focus is on algorithms with amortized efficiency guarantees rather than worst-case bounds. This choice is motivated by practical simplicity; algorithms with worst-case guarantees often require complex ``background'' processes that run multiple systems simultaneously (as in \cite{bhattacharya2025fullydynamicsetcover,solomon2025dynamicsetcoverworstcase,10756164}), introducing significant overhead. The algorithms presented in \Cref{comp} consist of several intricate procedures and constraints designed solely to optimize asymptotic bounds, which can hinder practical performance. We investigate which ones provide value and which constitute bottlenecks in practice, leading us to implement the following simplified versions:

\vspace{-2pt}

\begin{enumerate}
    \item \textbf{The ``robust'' algorithm}: A novel and simple algorithm based on a key property established in \cite{solomon2025dynamicsetcoverworstcase}.  
    \item \textbf{The ``local'' algorithm}: A simplified implementation of the algorithm from \cite{gupta2017online}. To achieve a practical approximation—avoiding the large ratio of $>1000 \ln n$—we sacrifice the theoretical $O(f \log n)$ update time in favor of better performance in practice.
    \item \textbf{The ``partial'' algorithm}: A simplification of the algorithm from \cite{solomon2023dynamic}. We modify complex procedures intended to reduce the theoretical update time as they often increase practical update time.  
    \item \textbf{The ``global'' algorithm}: A further simplified version of the \emph{amortized} variant in \cite{10756164}. Notably we allow $\beta \geq 1.25$, which technically weakens theoretical guarantees but is essential for significantly reducing practical update time.
\end{enumerate}

\vspace{-2pt}

\noindent Our evaluations rely on three metrics, all amortized: set cover size, update time, and recourse. Each algorithm uses a tradeoff parameter $\beta$ to balance solution quality (size) against efficiency (update time and recourse). We first evaluate the influence of $\beta$ on each algorithm individually, and then perform a comparative analysis between the four algorithms.

\vspace{-5pt}

\subparagraph{Paper Organization.}
In \Cref{staticsec}, we discuss and analyze the static greedy algorithm, its implementation, and its properties, as it serves as the foundational building block for all our dynamic algorithms. \Cref{basicsec} introduces the ``robust'' algorithm and analyzes its theoretical performance in comparison to the naive baseline which recomputes the set cover from scratch at every update step. In \Cref{levelsec}, we present and analyze the remaining three algorithms, highlighting how they differ from their original theoretical versions. \Cref{expsec} is dedicated to the experimental evaluation; we describe the instances used, detail our methodology, and present a discussion of our results. Finally, we conclude the paper in \Cref{concsec}.

%% file: sec2.tex
\section{The Static Greedy Algorithm} \label{staticsec}

In this section, we present and analyze the static greedy algorithm that serves as the foundation for all dynamic algorithms discussed in this paper. In the classic static greedy algorithm, sets are selected iteratively: in each step, we choose a set $s \in \mathcal{F}$ that covers the maximum number of uncovered elements, add it to the solution $\mathcal{C}$, and update the collection of uncovered elements. This process repeats until all elements are covered, yielding a $(\ln n + 1)$-approximation \cite{Chvatal79,Johnson74,Lovasz75}. However, as observed in \cite{10.1145/1871437.1871501}, identifying the absolute maximum set is computationally expensive in practice, requiring either a priority queue with an inverted index or multiple passes over the instance. In \cite{10.1145/1871437.1871501} it was further observed that \emph{relaxing} this greedy requirement by choosing sets that are merely \emph{close} to the maximum can drastically reduce practical running time with negligible impact on the approximation factor.

Formally, the sets are partitioned into $\lceil \log_{\beta} n \rceil + 1$ levels for a constant $\beta > 1$. Each set $s \in \mathcal{F}$ is initially placed at level $l = \lfloor \log_{\beta} |s| \rfloor$. The algorithm then iterates through levels from highest to lowest. For each set $s$ at the current level $l$, we check if it still contains at least $\beta^l$ uncovered elements. If it does, $s$ is added to $\mathcal{C}$ and its elements are marked as covered. Otherwise, $s$ is demoted to the highest level $l' < l$ such that the number of uncovered elements it contains is $\geq \beta^{l'}$; if no uncovered elements remain, the set is removed. The pseudo-code for this procedure is provided in \cref{staticalg}, called with $\mathcal{E} = \mathcal{U}$ and $\mathcal{S} = \{s \in \mathcal{F} \mid s \cap \mathcal{U} \neq \emptyset \}$.

\begin{algorithm}
	\caption{StaticGreedy($\mathcal{E}$,$\mathcal{S}$,$\beta$)} \label{staticalg}
    $\mathcal{C} \leftarrow \emptyset$\;
    \For{$s \in \mathcal{S}$}{
        Place $s$ at level $\lfloor \log_{\beta} |s \cap \mathcal{E}| \rfloor$\;
    }
	\For{$l$ from $\lceil \log_{\beta} |\mathcal{E}| \rceil$ downto $0$}{
        \While{level $l$ is not empty}{
            Choose arbitrary set $s$ from level $l$\;
            \If{$|s \cap \mathcal{E}| \geq \beta ^ l$}{
                $\mathcal{C} \leftarrow \mathcal{C} \cup \{ s \}$\;
                $\mathcal{E} \leftarrow \mathcal{E} \setminus (s \cap \mathcal{E})$\;
            }
            \Else{
                \If{$|s \cap \mathcal{E}| > 0$}{
                    Place $s$ at level $\lfloor \log_{\beta} |s \cap \mathcal{E}| \rfloor$\;
                }
                \Else{
                    Place $s$ at level $-1$\;
                }
            } 
        }
    }
    return ($\mathcal{C}$)\;    
\end{algorithm} 

\noindent The asymptotic complexity of \cref{staticalg} is $O(n + m \log_{\beta} n)$, accounting for the worst-case scenario where every set descends through all levels—a behavior rarely observed in practice. Before analyzing the approximation factor, we present the following key definitions that are central to both the static and dynamic algorithms:

\begin{definition}\label{defcov}
Denote by $\cov(s) \subseteq s$ the elements that were removed from the collection of uncovered elements due to the addition of $s$ to $\mathcal{C}$ by \cref{staticalg}. If $s \notin \mathcal{C}$ then $\cov(s) = \emptyset$. 
\end{definition}

\begin{definition}\label{deflev}
Denote by $\lev(s)$ the level from which $s$ was added to $\mathcal{C}$, where we define $\lev(s) = -1$ if $s \notin \mathcal{C}$. Similarly, for each $e \in \cov(s)$, let $\lev(e) = \lev(s)$.
\end{definition}

\begin{definition}\label{defnjs}
For any $s \in \mathcal{F}$ and $j \leq \log_{\beta} n$, $j \in \mathbb{N}$, let $N_j(s) = \{e \in s \mid \lev(e) < j\}$.
\end{definition}

\noindent Intuitively, each $\cov(s)$ contains the elements for which $s$ is ``responsible for covering''. Since every element is removed from the uncovered collection by exactly one set, the collection of all $\cov(s)$ forms a partition of $\mathcal{U}$. The following observations hold after running \cref{staticalg}:

\begin{observation}\label{obsnoND}
For any $s \in \mathcal{C}$, $|\cov(s)| \geq \beta^{\lev(s)}$.
\end{observation}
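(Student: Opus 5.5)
The plan is to read the claim directly off the acceptance test in \cref{staticalg} together with \Cref{defcov} and \Cref{deflev}; no further machinery is needed.

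\textbf{Step 1 (the moment of addition).} Fix $s \in \mathcal{C}$. By the pseudo-code, $s$ enters $\mathcal{C}$ at exactly one moment. At that moment the loop is processing some level $l$, and $s$ has just passed the test $|s \cap \mathcal{E}| \geq \beta^{l}$, where $\mathcal{E}$ is the current set of uncovered elements. By \Cref{deflev}, this $l$ is exactly $\lev(s)$.

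\textbf{Step 2 (identify $\cov(s)$).} The very next line removes $s \cap \mathcal{E}$ from $\mathcal{E}$, with $\mathcal{E}$ taken at that same moment. By \Cref{defcov}, $\cov(s)$ is precisely the set of elements removed from the uncovered collection because $s$ was added, so $\cov(s) = s \cap \mathcal{E}$.

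\textbf{Step 3 (conclude).} Combining the two steps gives $|\cov(s)| = |s \cap \mathcal{E}| \geq \beta^{l} = \beta^{\lev(s)}$, which is the claim.

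The only point needing care is that a set is added at most once, so that $\cov(s)$ and $\lev(s)$ are well defined and refer to the same iteration. This holds because the inner loop processes sets in the current level, and the acceptance branch never places $s$ back into any level. In contrast, the demotion branch reinserts $s$ only when it is not added. Hence $\lev(s)$ is the level at which the test succeeded, with no later re-addition that could change $\cov(s)$.
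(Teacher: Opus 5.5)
Your proposal is correct and follows the paper's proof, which likewise reads the bound directly off the acceptance test in \cref{staticalg} combined with the definitions of $\cov(s)$ and $\lev(s)$. Your remark that each set is added at most once makes explicit a point the paper leaves implicit; it also follows because $s \cap \mathcal{E} = \emptyset$ after $s$ is added, so the test $|s \cap \mathcal{E}| \geq \beta^{l} \geq 1$ can never succeed again.
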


\begin{proof}
This follows immediately from Line 8 in \cref{staticalg} by definition of $\lev(s)$. 
\end{proof}

\begin{observation}\label{obsnoPD}
For any $s \in \mathcal{F}$ and $j \leq \log_{\beta} n$, $j \in \mathbb{N}$ we have $|N_j(s)| < \beta^{j}$.
\end{observation}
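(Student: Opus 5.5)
Fix a set $s \in \mathcal{F}$ and an integer $j \le \log_\beta n$. The approach is to look at the moment when \cref{staticalg} finishes processing level $j$ and starts on level $j-1$, and to show that $N_j(s)$ is contained in the uncovered part of $s$ at that moment. Call this moment $t_j$, and let $\mathcal{E}_{t_j}$ be the collection of uncovered elements then.

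The first step is an invariant about positions in the level structure. Whenever a set sits at level $l \ge 0$, the number of its uncovered elements is less than $\beta^{l+1}$. This holds at initialization because each set is placed at $\lfloor \log_\beta |s\cap\mathcal{E}|\rfloor$. It holds after every re-placement for the same reason. It survives every other step because $|s\cap\mathcal{E}|$ only decreases over time.

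The second step is that, at time $t_j$, every set still in the structure sits at a level $\le j-1$. Levels are scanned from the top down, and each level is emptied before the scan moves on. A set is only ever moved to a level strictly lower than the current one: a set $s$ picked at level $l$ with $|s\cap\mathcal{E}|<\beta^l$ goes to $\lfloor\log_\beta|s\cap\mathcal{E}|\rfloor<l$ (or to $-1$), and chosen sets leave the structure. Combining this with the invariant, every set not yet chosen has $|s\cap\mathcal{E}_{t_j}|<\beta^{j}$. This also covers sets sitting at level $-1$, which have no uncovered elements.

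Now we handle two cases for the fixed $s$.

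\emph{Case $s\notin\mathcal{C}$ by time $t_j$.} Take any $e\in N_j(s)$, so $\lev(e)<j$. Then $e$ was not covered at a level $\ge j$. A set added at level $l$ is added while level $l$ is being scanned, so $e$ was not covered before $t_j$. Hence $N_j(s)\subseteq s\cap\mathcal{E}_{t_j}$, and the second step gives $|N_j(s)|<\beta^j$.

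\emph{Case $s$ was chosen before $t_j$, at level $\lev(s)\ge j$.} At that moment all uncovered elements of $s$ are covered, by $s$ or earlier. So every $e\in s$ has $\lev(e)\ge \lev(s)\ge j$, and $N_j(s)=\emptyset$.

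There are some edge matters to check. Elements with $\lev(e)=-1$ do not arise for $e\in\mathcal{U}$, since the algorithm covers everything. If the statement is read over $s\cap\mathcal{U}$, this is fine. Sets with $s\cap\mathcal{U}=\emptyset$ are trivial. For $j$ above the top level $\lceil\log_\beta|\mathcal{E}|\rceil$, the bound is trivial since $|s|\le n\le\beta^j$ roughly; this boundary should be checked against the exact range $j\le\log_\beta n$.

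The main obstacle is stating the ``sets only move downward below the current level'' invariant precisely, so that no set with at least $\beta^j$ uncovered elements can survive past the processing of level $j$.
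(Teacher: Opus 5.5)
Your proof is correct and follows essentially the same route as the paper. Both split on whether $s$ was chosen at a level $\ge j$, which gives $N_j(s)=\emptyset$, or not. In the latter case, after level $j$ is processed $s$ sits below level $j$ while all of $N_j(s)$ is still uncovered, which the ``highest level'' placement rule and the monotonicity of $|s\cap\mathcal{E}|$ forbid; your explicit invariant $|s\cap\mathcal{E}|<\beta^{l+1}$ just spells out what the paper argues by contradiction.

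One slip in your boundary remark: $j\le\log_\beta n$ gives $\beta^j\le n$, not $n\le\beta^j$. The correct observation is that if $j>\lceil\log_\beta|\mathcal{U}|\rceil$ then $\beta^j>|\mathcal{U}|\ge|N_j(s)|$. Equivalently, take $t_j$ to be the start of the run, where all initial levels are already below $j$.
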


\begin{proof}
Assume by contradiction that exist $s$ and $j$ such that $|N_j(s)| \geq \beta^{j}$. If $\lev(s) \geq j$, then when $s$ was added to $\mathcal{C}$, every uncovered element $e \in s$ was assigned to $\cov(s)$ and thus obeys $\lev(e) \geq j$; so $N_j(s) = \emptyset$, a contradiction. If $\lev(s) < j$, then right after level $j$ was processed, $s$ was not in $\mathcal{C}$ and was at some level $j' < j$. At that time, all elements in $N_j(s)$ were uncovered by definition, so $|s \cap \mathcal{E}| \geq |N_j(s)| \geq \beta^j$. However, whenever $s$ is placed at a new level (lines 4 and 13), we assign it to the \emph{highest} level $l'$ such that the number of uncovered elements it contains is $\geq \beta^{l'}$. Since the number of uncovered elements never increases, we cannot have $|s \cap \mathcal{E}| \geq \beta^j$ while $s$ remains at a level less than $j$, a contradiction.
\end{proof}

\noindent Informally, \cref{obsnoND} shows that every set is positioned at its ``correct'' level, while \cref{obsnoPD} implies that no set can ascend to a higher level by claiming low-level elements and increasing their levels. Essentially, all elements are positioned ``as high as possible'', which captures the greedy principle in action: elements are effectively chosen at the earliest possible opportunity. As it turns out, Observations \ref{obsnoND} and \ref{obsnoPD} suffice for the approximation analysis, and this point will be emphasized after the following approximation factor lemma.

\begin{lemma} \label{lemapprox}
\cref{staticalg} produces a solution $\mathcal{C}$ with approximation factor $\beta \cdot (\ln (n) + 1)$.
\end{lemma}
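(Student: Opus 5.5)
The plan is a charging (pricing) argument that uses only \cref{obsnoND} and \cref{obsnoPD}.

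\textbf{Step 1: define prices.} For each element $e \in \mathcal{U}$, let $s$ be the unique set with $e \in \cov(s)$, and set $p(e) = 1/|\cov(s)|$. Since the sets $\cov(s)$ partition $\mathcal{U}$, we get $\sum_{e \in \mathcal{U}} p(e) = |\mathcal{C}|$. By \cref{obsnoND}, $p(e) \le \beta^{-\lev(e)}$.

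\textbf{Step 2: reduce to a per-set bound.} Let $\mathcal{C}^*$ be an optimal cover. Every element lies in some $s^* \in \mathcal{C}^*$, so
\[
|\mathcal{C}| \;\le\; \sum_{s^* \in \mathcal{C}^*} \sum_{e \in s^*} p(e).
\]
It therefore suffices to show $\sum_{e\in s} \beta^{-\lev(e)} \le \beta(\ln n + 1)$ for every $s \in \mathcal{F}$. Here $s$ means $s \cap \mathcal{U}$, and all levels satisfy $\lev(e) \ge 0$.

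\textbf{Step 3: a harmonic-type bound via \cref{obsnoPD}.} Fix $s$ and sort its elements $e_1, \dots, e_k$ ($k \le n$) by nondecreasing level. The $i$-th element in this order, $e_i$, has level $\ell_i$. By \cref{obsnoPD} with $j = \ell_i + 1$, every element of level at most $\ell_i$ is counted in $N_{\ell_i+1}(s)$. Hence $i \le |N_{\ell_i+1}(s)| < \beta^{\ell_i+1}$, which gives $\beta^{-\ell_i} < \beta/i$. Summing over $i$,
\[
\sum_{e \in s} \beta^{-\lev(e)} \;<\; \beta \sum_{i=1}^{k} \frac{1}{i} \;\le\; \beta(\ln n + 1),
\]
and this yields the lemma.

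\textbf{Main obstacle: boundary cases of \cref{obsnoPD}.} The observation requires $j \le \log_\beta n$, but $j = \ell_i + 1$ may exceed this when $\ell_i$ is the top level.
\begin{itemize}
\item If $j > \log_\beta n$, then $\beta^{j} > n \ge i$, so the needed inequality $i < \beta^{\ell_i+1}$ holds trivially.
\item When $j$ lies in range, note that $N_j(s)$ as defined contains all elements of $s$ of level $< j$. Elements with $\lev = -1$ cannot occur, since every element is covered.
\end{itemize}
So the remaining care is to confirm that ties in the sorting cause no trouble. The bound $i \le |N_{\ell_i+1}(s)|$ holds for every index $i$ in a tie group, because all of that group is counted, so ties are harmless.
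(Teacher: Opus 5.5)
Your proof is correct and takes essentially the same route as the paper. The paper charges $q_e=\beta^{-\lev(e)}$ directly, using \cref{obsnoND} to get $|\mathcal{C}|\le\sum_e q_e$; you instead use exact prices $1/|\cov(s)|$ and bound them by $q_e$. The paper then obtains the harmonic bound from \cref{obsnoPD} applied to $N_{\lev(e)+1}(s_i)$, sorting by nonincreasing level and arguing by contradiction where you sort by nondecreasing level and argue directly. Your explicit treatment of the case $\lev(e)+1>\log_\beta n$, where $\beta^{j}>n$ makes the bound trivial, covers a detail the paper leaves implicit.
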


\begin{proof}
For each element $e$ define $q_e = \beta^{-\lev(e)}$. For each set $s \in \mathcal{C}$, we know by \cref{obsnoND} that $|\cov(s)| \geq \beta^{\lev(s)}$, and thus $1 \leq \sum_{e \in \cov(s)} q_e$. Summing up on all sets in $\mathcal{C}$ we obtain:
\begin{equation} \label{eq1}
    \hfill |\mathcal{C}| \leq \sum_{e \in \mathcal{U}} q_e. \hfill
\end{equation}
\noindent Denote by $s_1, s_2, \ldots, s_k$ the sets in an optimal set cover, and consider one of these sets $s_i$. Let $s_i = \{e_1, e_2, \ldots, e_{x_i} \}$, where $x_i = |s_i|$. Notice that each element $e_j \in s_i$ is not necessarily in $\cov(s_i)$. Without loss of generality assume that $\lev(e_1) \geq \lev(e_2) \geq \ldots \geq \lev (e_{x_i})$.  

\begin{claim} \label{claimapprox}
For any $e_j \in s_i$ we have that $q_{e_j} \leq \frac{\beta}{x_i - j + 1}$.
\end{claim}
\begin{claimproof}
Assume by contradiction that $q_{e_j} > \frac{\beta}{x_i - j + 1}$ and denote $l = \lev(e_j)$. Rearranging, we get that $x_i - j + 1 > \beta^{l+1}$. Since $\lev(e_{j'}) \leq l$ for any $j' \geq j$, we know that all the $(x_i - j + 1)$ elements $e_j,e_{j+1}, \ldots, e_{x_i}$ are at a level $\leq l$. Since all these elements are in $s_i$, we know that $|N_{l+1}(s_i)| \geq x_i - j + 1 > \beta^{l+1}$, contradicting \cref{obsnoPD}.
\end{claimproof}

\noindent Equipped with \cref{claimapprox}, summing up on all $e \in s_i$ we get that $\sum_{e \in s_i} q_e \leq \beta \cdot (1 + \frac{1}{2} + \frac{1}{3} + \ldots + \frac{1}{x_i}) \leq \beta \cdot H_n \leq \beta \cdot (\ln (n) + 1)$, where $H_n$ is the $n$'th harmonic number. By summing up on all $k$ sets in an optimal solution, we obtain:
\begin{equation} \label{eq2}
    \hfill \sum_{i=1}^k \sum_{e \in s_i} q_e \leq \sum_{i=1}^k \beta \cdot (\ln (n) + 1) = k \cdot \beta \cdot (\ln (n) + 1). \hfill
\end{equation}
\noindent Since an optimal solution covers all elements, we know that:
\begin{equation} \label{eq3}
    \hfill \sum_{e \in \mathcal{U}} q_e \leq \sum_{i=1}^k \sum_{e \in s_i} q_e. \hfill
\end{equation}
\noindent Combining Equations \ref{eq1}, \ref{eq2} and \ref{eq3}, we get that $\frac{|\mathcal{C}|}{k} \leq \beta \cdot (\ln (n) + 1)$, concluding the proof.
\end{proof}

\noindent Thus, the approximation factor incurs only a slight overhead of $\beta$ compared to the classic greedy algorithm. The \textbf{key insight} is that this result relies \textbf{solely} on Observations \ref{obsnoND} and \ref{obsnoPD}. Consequently, \textbf{any} partition of $\mathcal{U}$ into $\cov$ sets and corresponding levels that satisfies these two observations will yield a $\beta \cdot (\ln n + 1)$ approximation by including all sets with non-empty $\cov$ sets, \textbf{regardless of how it was obtained}. This is crucial because the dynamic algorithms presented in \cref{levelsec} \emph{maintain} an element partition that satisfies a ``relaxed'' version of these observations. We conclude this section with the following theorem:

\begin{nolinenumbers}
\begin{tcolorbox} [width=\linewidth, sharp corners=all, colback=white!95!black]
\begin{theorem} \label{thm}
Consider an assignment of each element $e \in \mathcal{U}$ to a set $s \in \mathcal{F}$ containing it, and for each $s \in \mathcal{F}$ denote by $\cov(s)$ the elements assigned to $s$. Place each set $s$ such that $\cov(s) \neq \emptyset$ at an integer level $\lev(s) \in [0,\lceil \log_{\beta} n \rceil]$ (for constant $\beta > 1$), and place each $e \in \cov(s)$ at level $\lev(e) = \lev(s)$. $\mathcal{C} = \{s \in \mathcal{F} \mid \cov(s) \neq \emptyset\}$ is a $\beta \cdot (\ln n + 1)$-approximate set cover if the following two properties hold:

\vspace{-3pt}

\begin{enumerate}
    \item For any $s \in \mathcal{F}$ such that $\cov(s) \neq \emptyset$ we have $|\cov(s)| \geq \beta^{\lev(s)}$.
    \item For any $s \in \mathcal{F}$ and $j \leq \log_{\beta} n$, $j \in \mathbb{N}$ we have $|\{e \in s \mid \lev(e) < j\}| < \beta^{j}$.
\end{enumerate}
\end{theorem}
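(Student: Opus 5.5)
The plan is to re-run the charging argument from the proof of \cref{lemapprox} verbatim, checking that it used nothing about \cref{staticalg} beyond the two listed properties. Since each element is assigned to exactly one set containing it, the sets $\cov(s)$ partition $\mathcal{U}$. Hence $\mathcal{C}=\{s \mid \cov(s)\neq\emptyset\}$ covers $\mathcal{U}$ and is a valid set cover. Every $e$ lies in some $\cov(s)$ with $s\in\mathcal{C}$, so $\lev(e)$ is a well-defined integer in $[0,\lceil\log_\beta n\rceil]$.

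\textbf{Upper bound on $|\mathcal{C}|$.} Define $q_e=\beta^{-\lev(e)}$. By property~1, every $s\in\mathcal{C}$ satisfies $\sum_{e\in\cov(s)}q_e=|\cov(s)|\beta^{-\lev(s)}\ge 1$. Summing over the disjoint sets $\cov(s)$ gives $|\mathcal{C}|\le\sum_{e\in\mathcal{U}}q_e$, exactly as in \cref{eq1}.

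\textbf{Per-set bound for an optimal set.} Fix an optimal cover $s_1,\dots,s_k$ and a set $s_i$. Restrict attention to its elements in the current universe, $s_i\cap\mathcal{U}=\{e_1,\dots,e_{x}\}$ with $x\le n$, sorted so that $\lev(e_1)\ge\dots\ge\lev(e_x)$. I claim $q_{e_j}\le\beta/(x-j+1)$. Suppose not, and set $l=\lev(e_j)$. Then $x-j+1>\beta^{l+1}$. The elements $e_j,\dots,e_x$ all have level $\le l<l+1$, so $|\{e\in s_i\mid \lev(e)<l+1\}|>\beta^{l+1}$. This contradicts property~2 with $j=l+1$.

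\textbf{Main obstacle: the range of $j$ in property~2.} Property~2 is only stated for $j\le\log_\beta n$, while $l+1$ may be as large as $\lceil\log_\beta n\rceil+1$. I handle this by cases:
\begin{itemize}
\item If $l+1\le\log_\beta n$, property~2 applies directly.
\item Otherwise $\beta^{l+1}>n\ge x\ge x-j+1$, which already contradicts $x-j+1>\beta^{l+1}$.
\end{itemize}
So the claim holds in all cases. This is precisely where the paper's proof of \cref{claimapprox} implicitly relies on the same range.

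\textbf{Conclusion.} Summing the per-element bound over $s_i$ gives $\sum_{e\in s_i}q_e\le\beta H_x\le\beta(\ln n+1)$. Since $s_1,\dots,s_k$ cover $\mathcal{U}$ and $q_e\ge 0$, we have $\sum_{e\in\mathcal{U}}q_e\le\sum_i\sum_{e\in s_i}q_e\le k\beta(\ln n+1)$. Combined with the upper bound on $|\mathcal{C}|$, this yields $|\mathcal{C}|\le\beta(\ln n+1)\cdot k$.
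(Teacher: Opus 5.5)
Your proposal is correct and follows the paper's route. The paper justifies the theorem by observing that the charging proof of \cref{lemapprox} (the bound $|\mathcal{C}| \le \sum_{e} q_e$ plus the per-optimal-set harmonic bound of \cref{claimapprox}) uses only the two listed properties. Your case split for $l+1>\log_\beta n$ and your restriction to $s_i\cap\mathcal{U}$ are small improvements in rigor: they close edge cases that the paper's argument passes over silently.
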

\end{tcolorbox}
\end{nolinenumbers}

%% file: sec3.tex
\section{Basic Dynamic Algorithms} \label{basicsec}
In the dynamic setting, the collection of sets $\mathcal{F}$ is fixed, but elements are inserted into or deleted from $\mathcal{U}$ at each update step. The \emph{naive baseline} re-runs \cref{staticalg} after every update; while this maintains a $\beta \cdot (\ln n + 1)$ approximation, this is extremely inefficient. Instead, our first dynamic algorithm leverages a ``robustness'' property established by \cite{solomon2025dynamicsetcoverworstcase}. Informally, we execute the static greedy algorithm only intermittently and simply maintain a valid solution between runs, significantly reducing the update time.

\subsection{The ``Robust'' Algorithm}
We begin with a key claim that forms the basis of our first dynamic algorithm.
\begin{claim} \label{claimrobust}
Let $\mathcal{C}$ be a solution produced by \cref{staticalg}. For any $1 < \beta < 2$ and any collection of deleted elements $\mathcal{D} \subset \mathcal{U}$ such that $|\mathcal{D}| \leq (\beta - 1) \cdot |\mathcal{C}|$, the optimum set cover size $k'$ for the remaining elements $\mathcal{U} \setminus \mathcal{D}$ obeys $k' \geq \frac{(2-\beta) \cdot |\mathcal{C}|}{\beta \cdot (\ln n + 1)}$.
\end{claim}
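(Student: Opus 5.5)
Our plan is to reuse the charging argument from the proof of \cref{lemapprox}, restricted to the surviving elements $\mathcal{U}' = \mathcal{U} \setminus \mathcal{D}$. We keep the levels and the weights $q_e = \beta^{-\lev(e)}$ produced by the \emph{original} run of \cref{staticalg} on $\mathcal{U}$. The point is that \cref{obsnoPD} is a statement about every set $s \in \mathcal{F}$, and it only becomes weaker when we pass to a subset of the elements. Concretely, $|\{e \in s \cap \mathcal{U}' \mid \lev(e) < j\}| \leq |N_j(s)| < \beta^j$.

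Hence the argument of \cref{claimapprox} applies verbatim to any set $s'_i$ of an optimal cover $s'_1,\dots,s'_{k'}$ of $\mathcal{U}'$, with $x_i = |s'_i \cap \mathcal{U}'|$. Ordering the elements of $s'_i \cap \mathcal{U}'$ by decreasing level, the $j$-th one has $q \leq \beta/(x_i-j+1)$. Summing gives $\sum_{e \in s'_i \cap \mathcal{U}'} q_e \leq \beta(\ln n + 1)$. Since the $s'_i$ cover $\mathcal{U}'$, we obtain
\[ \sum_{e \in \mathcal{U}'} q_e \leq k' \cdot \beta (\ln n + 1). \]

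For the lower bound on $\sum_{e\in\mathcal{U}'} q_e$, we start from \eqref{eq1}, i.e.\ $\sum_{e \in \mathcal{U}} q_e \geq |\mathcal{C}|$, which rests on \cref{obsnoND}. We then subtract the contribution of the deleted elements. Every element has $\lev(e) \geq 0$, so $q_e \leq 1$ for all $e$. It follows that
\[ \sum_{e \in \mathcal{U}'} q_e \geq |\mathcal{C}| - |\mathcal{D}| \geq |\mathcal{C}| - (\beta-1)|\mathcal{C}| = (2-\beta)|\mathcal{C}|. \]
This quantity is positive precisely because $\beta < 2$, which explains that hypothesis.

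Combining the two bounds gives $(2-\beta)|\mathcal{C}| \leq k' \beta(\ln n + 1)$, which is the claim. One detail needs care: the harmonic bound uses $x_i \leq n$. This holds because $|\mathcal{U}'| \leq |\mathcal{U}| \leq n$, and the levels are still those computed for $\mathcal{U}$, so $j \leq \log_\beta n$ remains in range. The main (mild) obstacle is to make sure that deleting elements does not break the monotonicity used in \cref{claimapprox}. This is handled by the subset observation above, since $N_{l+1}$ restricted to $\mathcal{U}'$ is contained in the original $N_{l+1}(s'_i)$.
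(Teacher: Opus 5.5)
Your proof is correct and follows essentially the same approach as the paper. The paper also keeps the original weights $q_e = \beta^{-\lev(e)}$ and uses that \cref{obsnoPD} survives deletions to get $\sum_{e \in \mathcal{U}\setminus\mathcal{D}} q_e \le k'\beta(\ln n+1)$; it then lower-bounds this sum by $|\mathcal{C}| - |\mathcal{D}| \ge (2-\beta)|\mathcal{C}|$ via \Cref{eq1} and $q_e \le 1$. Your restriction to $s'_i \cap \mathcal{U}'$ and your subset remark on $N_j$ only make explicit what the paper compresses into one line.
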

\begin{claimproof}
Denote by $s_1, s_2, \ldots, s_{k'}$ the sets in an optimal set cover for the new system with universe $\mathcal{U} \setminus \mathcal{D}$. Since \cref{obsnoPD} remains valid, the logic from \cref{lemapprox} still yields:
\vspace{-5pt}
\begin{equation} \label{eq4}
\hfill \sum_{e \in \mathcal{U} \setminus \mathcal{D}} q_e \leq \sum_{i=1}^{k'} \sum_{e \in s_i} q_e \leq k' \cdot \beta \cdot (\ln n + 1). \hfill
\end{equation}
\noindent While \cref{obsnoND} may no longer hold following the deletions, we have by \Cref{eq1}:
\begin{equation} \label{eq5}
    \hfill |\mathcal{C}| \leq \sum_{e \in \mathcal{U}} q_e = \sum_{e \in \mathcal{U} \setminus \mathcal{D}} q_e + \sum_{e \in \mathcal{D}} q_e \leq \sum_{e \in \mathcal{U} \setminus \mathcal{D}} q_e + |\mathcal{D}| \leq \sum_{e \in \mathcal{U} \setminus \mathcal{D}} q_e + (\beta - 1) \cdot |\mathcal{C}|. \hfill
\end{equation}
Rearranging and substituting \cref{eq4} gives the desired ratio, concluding the claim.
\end{claimproof}

\noindent The ``robust'' algorithm operates in discrete intervals. At the start of the $i$-th interval, we compute a set cover $\mathcal{C}_i$ and set the interval length to $(\beta - 1) \cdot |\mathcal{C}_i|$ update steps. Throughout this period, the algorithm naively maintains feasibility: if an inserted element $e$ is not covered by the current solution, an arbitrary set $s \ni e$ is added to the cover. Once the interval ends, we re-run the static greedy algorithm to obtain $\mathcal{C}_{i+1}$ and initiate the next interval. \Cref{robustalg} provides the pseudo-code for a single update step, where $e$ is the updated element, $\mathcal{S}(e)$ is the collection of sets containing it and $r$ is a parameter used to track the remaining steps in the current interval. The algorithm uses \Cref{staticalg} as a subroutine.

\begin{algorithm}
	\caption{UpdateRobust($e$,$\mathcal{S}(e)$)} \label{robustalg}
    \If{insertion}{
        $\mathcal{U} \leftarrow \mathcal{U} \cup \{e\}$\;
        \If{$\mathcal{C} \cap \mathcal{S}(e) = \emptyset$}{
            Choose arbitrary $s \in \mathcal{S}(e)$\;
            $\mathcal{C} \leftarrow \mathcal{C} \cup \{s\}$\;
        }
    }
    \Else{
        $\mathcal{U} \leftarrow \mathcal{U} \setminus \{e\}$\;
    }
    $r \leftarrow r - 1$\;
    \If{$r = 0$}{
        $\mathcal{S} \leftarrow \{s \in \mathcal{F} \mid s \cap \mathcal{U} \neq \emptyset \}$\;
        $r \leftarrow (\beta - 1) \cdot$|StaticGreedy($\mathcal{U}$,$\mathcal{S}$,$\beta$)|\;
    }
\end{algorithm}

\vspace{-10pt}

\begin{claim} \label{claimrobustalg}
\cref{robustalg} maintains a $\left(\frac{\beta^2}{2 - \beta} \cdot (\ln n + 1)\right)$-approximate set cover $\mathcal{C}$.
\end{claim}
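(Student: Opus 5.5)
Fix an arbitrary interval $i$. Let $\mathcal{C}_i$ be the cover returned by \cref{staticalg} at its start, for universe $\mathcal{U}_i$. Consider any time $t$ within the interval, with current universe $\mathcal{U}_t$ and maintained solution $\mathcal{C}^{(t)}$. Let $\mathcal{D}$ be the elements of $\mathcal{U}_i$ deleted so far and $\mathcal{I}$ the elements inserted so far. Since the interval lasts $(\beta-1)|\mathcal{C}_i|$ steps, we have $|\mathcal{D}| + |\mathcal{I}| \leq (\beta-1)|\mathcal{C}_i|$. The plan is to bound the size of $\mathcal{C}^{(t)}$ from above, bound the optimum $k_t$ for $\mathcal{U}_t$ from below, and divide.

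\textbf{Upper bound on the solution.} Each update adds at most one set, and only on insertions. Hence $|\mathcal{C}^{(t)}| \leq |\mathcal{C}_i| + |\mathcal{I}| \leq |\mathcal{C}_i| + (\beta-1)|\mathcal{C}_i| = \beta |\mathcal{C}_i|$. Feasibility holds because the sets of $\mathcal{C}_i$ still cover every surviving original element, and each inserted element is covered either already or by the set added for it.

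\textbf{Lower bound on the optimum.} Inserting elements can only increase the optimum, so $k_t \geq k'$, where $k'$ is the optimum for $\mathcal{U}_i \setminus \mathcal{D}$. If an element is deleted and then reinserted, I treat it as inserted: then $\mathcal{U}_i\setminus\mathcal{D}\subseteq\mathcal{U}_t$ still holds, and the budget $|\mathcal{D}|+|\mathcal{I}|\le(\beta-1)|\mathcal{C}_i|$ still covers it. Since $|\mathcal{D}| \leq (\beta-1)|\mathcal{C}_i|$, \cref{claimrobust} applies to $\mathcal{C}_i$ and gives $k' \geq \frac{(2-\beta)|\mathcal{C}_i|}{\beta(\ln n + 1)}$. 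One technicality: \cref{claimrobust} asks for $\mathcal{D}\subset\mathcal{U}$, which holds here, and uses $\ln n$ with $n$ the universal bound on the universe size, so it remains valid.

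\textbf{Combining.} Dividing the two bounds gives
$$\frac{|\mathcal{C}^{(t)}|}{k_t} \leq \frac{\beta |\mathcal{C}_i| \cdot \beta(\ln n+1)}{(2-\beta)|\mathcal{C}_i|} = \frac{\beta^2}{2-\beta}(\ln n+1).$$
At the start of each interval, the bound $\beta(\ln n+1)$ from \cref{lemapprox} is already smaller than this.

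\textbf{Main obstacle.} The delicate step is the bookkeeping across mixed insertions and deletions. We need monotonicity of the optimum under insertions, so that \cref{claimrobust}, which only handles deletions, can be used. We also need to charge insertions and deletions jointly to the interval length. A smaller point is that if $(\beta-1)|\mathcal{C}_i|$ is not an integer, it should be rounded down, so that the budget bound holds.
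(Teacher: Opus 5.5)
Your proposal is correct and follows the same route as the paper. Only insertions add sets, so the maintained cover stays below $\beta|\mathcal{C}_i|$; insertions never decrease the optimum, which reduces the analysis to the deletion-only bound of \cref{claimrobust}; dividing the two bounds then gives $\frac{\beta^2}{2-\beta}(\ln n+1)$. Your extra bookkeeping (reinserted elements, a joint budget for insertions and deletions, rounding the interval length) only makes explicit what the paper's three-line proof leaves implicit.
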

\begin{claimproof}
Each insertion can increase the solution size by at most one and never decreases the optimum size. During the $i$-th interval, which lasts $(\beta - 1) \cdot |\mathcal{C}_i|$ steps, the solution size remains $\leq \beta \cdot |\mathcal{C}_i|$. Applying the bound from \cref{claimrobust} yields the claimed approximation.
\end{claimproof}

\noindent The amortized update time and recourse of the ``robust'' algorithm depend primarily on the average solution size. Large solutions result in fewer re-runs of the static greedy algorithm, minimizing both amortized update time and recourse. Conversely, small solutions shorten the intervals, diminishing the efficiency gains over the naive baseline.

%% file: sec4.tex
\section{Level-Based Dynamic Algorithms} \label{levelsec}

In this section we present three dynamic algorithms that are simplifications of known theoretical algorithms. Although they are far more complex than the ``robust'' algorithm, they are based on maintaining an assignment of elements to sets, as well as elements and sets to levels, that satisfies a \emph{relaxation} of the two properties defined in \Cref{thm}. The distinction between them lies in how they implement these relaxations. The first algorithm (\Cref{localsec}) maintains a \emph{local} relaxation of both properties in \Cref{thm}. The second algorithm (\Cref{partialsec}) maintains a \emph{global} relaxation of the first and a \emph{local} relaxation of the second. The third algorithm (\Cref{globalsec}) maintains a \emph{global} relaxation of both. Intuitively, a local relaxation signifies that each individual set satisfies the property within some added slack. In contrast, a global relaxation implies that the property is maintained ``on average'' across the entire collection of sets. 
In all three algorithms, the maintained set cover $\mathcal{C}$ consists of all sets $s \in \mathcal{F}$ such that $\cov(s) \neq \emptyset$. This construction \emph{ensures that a valid solution is always maintained}, even in cases where our simplifications modify the procedures required to preserve certain theoretical asymptotic guarantees.

\subsection{The ``Local'' Algorithm} \label{localsec}
In this section we introduce the ``local'' algorithm, a simplification of \cite{gupta2017online}, which maintains a per-set relaxation of the two properties in \Cref{thm}. We begin with the following definitions:

\begin{definition} \label{localdef}
A set $s$ is denoted ``negative dirty'' (ND) if $|\cov(s)| < \beta^{\lev(s) - 1}$, and denoted ``positive dirty with respect to $j$'' ($j$-PD) if $|N_j(s)| \geq \beta^{j + 1}$. In addition, for each element $e$ let $\asn(e)$ be the set $s$ such that $e \in \cov(s)$.
\end{definition}

\noindent Upon insertion of element $e$, it is assigned to the set containing it $s$ with the highest level $\lev(s)$. The local algorithm maintains the following invariant:

\begin{invariant} \label{invlocal}
No set is ND, and no set is $j$-PD for any level $j$.
\end{invariant}

\noindent If \Cref{invlocal} holds, then the properties of \Cref{thm} are satisfied within a factor of $\beta$. Specifically, every $s \in \mathcal{C}$ satisfies $|\cov(s)| \geq \beta^{\lev(s) - 1}$, and every $s \in \mathcal{F}$ satisfies $|N_j(s)| < \beta^{j + 1}$ for all levels $j$. Applying the logic from \Cref{lemapprox} with the extra slack yields the following:

\begin{lemma} \label{lemlocal}
Any solution satisfying \Cref{invlocal} has approximation factor $\beta^3 \cdot (\ln n + 1)$.
\end{lemma}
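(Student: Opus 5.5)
We will rerun the charging argument of \Cref{lemapprox}, now using the relaxed inequalities supplied by \Cref{invlocal} in place of Observations \ref{obsnoND} and \ref{obsnoPD}. The weights change slightly: for every element $e$ set $q_e = \beta^{1-\lev(e)}$, which is $\beta$ times the weight used in the static analysis.

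For the upper bound on $|\mathcal{C}|$, recall that no set is ND. Hence every $s \in \mathcal{C}$ satisfies $|\cov(s)| \geq \beta^{\lev(s)-1}$, which gives $\sum_{e \in \cov(s)} q_e \geq 1$. The $\cov$ sets partition $\mathcal{U}$, since every element is assigned to exactly one set through $\asn(e)$. Summing over $\mathcal{C}$ therefore gives $|\mathcal{C}| \leq \sum_{e \in \mathcal{U}} q_e$, which is the analogue of \Cref{eq1}.

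For the lower bound, fix a set $s_i = \{e_1,\ldots,e_{x_i}\}$ of an optimal cover, with its elements sorted so that $\lev(e_1) \geq \cdots \geq \lev(e_{x_i})$. The aim is the analogue of \Cref{claimapprox}, namely $q_{e_j} \leq \frac{\beta^3}{x_i-j+1}$. Suppose this fails, and write $l = \lev(e_j)$. Then $x_i - j + 1 > \beta^{3}\beta^{-(1-l)} = \beta^{l+2}$. The elements $e_j,\ldots,e_{x_i}$ all lie at level $\leq l$, so they all belong to $N_{l+1}(s_i)$. This gives $|N_{l+1}(s_i)| > \beta^{(l+1)+1}$, so $s_i$ would be $(l+1)$-PD, contradicting \Cref{invlocal}.

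The main point to check is the range of $j = l+1$. It must be a level for which the PD condition is actually enforced. If $l$ is the top level, then $N_{l+1}(s_i) \subseteq s_i$ and $|s_i| \leq n$, so we only need $n \leq \beta^{l+2}$. This holds whenever the top level is $\lceil \log_\beta n\rceil$, and otherwise we extend the invariant to $j = l+1$ in the obvious way. Once the per-element bound is in place, summing over $s_i$ gives $\sum_{e\in s_i} q_e \leq \beta^3 H_{x_i} \leq \beta^3(\ln n+1)$. The rest follows exactly as in Equations \ref{eq2} and \ref{eq3}: summing over the $k$ optimal sets, which together cover $\mathcal{U}$, yields $|\mathcal{C}| \leq \sum_{e \in \mathcal{U}} q_e \leq k\beta^3(\ln n+1)$.

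The factor $\beta^3$ comes from three sources. The ND slack contributes one factor of $\beta$ through the shifted weight. The PD slack contributes another, since the threshold is $\beta^{j+1}$ rather than $\beta^j$. The last factor is the rounding between $\lev(e)$ and $j = \lev(e)+1$, which is already present in the static bound.
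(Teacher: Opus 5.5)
Your proof is correct and follows the paper's own route. The paper observes that \Cref{invlocal} gives both properties of \Cref{thm} up to a factor of $\beta$ and then reruns the charging argument of \Cref{lemapprox}; you carry this out explicitly with the rescaled weights $q_e=\beta^{1-\lev(e)}$, the $(l+1)$-PD contradiction, and the boundary case $|N_{l+1}(s_i)|\le n\le\beta^{l+2}$ for the top level, all of which the paper leaves implicit.
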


\vspace{-10pt}

\begin{algorithm}
	\caption{UpdateLocal($e$,$\mathcal{S}(e)$)} \label{localalg}
    \If{insertion}{
        $\mathcal{U} \leftarrow \mathcal{U} \cup \{e\}$\;
        $s \leftarrow \mathsf{argmax} \{\lev(s') : s' \in \mathcal{S}(e)\}$ (breaking ties arbitrarily)\;
        $\mathcal{C} \leftarrow \mathcal{C} \cup \{s\}$; \: $\cov(s) \leftarrow \cov(s) \cup \{e\}$\;
        RisingPhase($\mathcal{S}(e)$)\; 
    }
    \Else{
        $\mathcal{U} \leftarrow \mathcal{U} \setminus \{e\}$\;
        $s \leftarrow \asn(e)$\;
        $\cov(s) \leftarrow \cov(s) \setminus \{e\}$\;
        FallingPhase($\{s\}$)\;
    }
\end{algorithm}

\begin{nolinenumbers}
\noindent 
\begin{minipage}{0.47\textwidth}
\begin{algorithm}[H]
	\caption{RisingPhase($\mathcal{S}$)} \label{risingalg}
    $\mathcal{S}' \leftarrow \emptyset$\;
    \For{$j$ from $\lceil \log_{\beta} n \rceil$ downto $0$}{
        \For{$s \in \mathcal{S}$}{
            \If{$s$ is $j$-PD}{
                Rise $s$ to level $j+1$\;
                $\cov(s) \leftarrow N_j(s)$\;
                \For{$e'$ in $\cov(s)$}{
                    $s' \leftarrow$ previous $\asn(e')$\;
                    $\mathcal{S}' \leftarrow \mathcal{S}' \cup \{s'\}$\;
                }
            }
        }
    }
    \If{$\mathcal{S}' \neq \emptyset$}{
        FallingPhase($\mathcal{S}'$)\;
    }
\end{algorithm}
\end{minipage}
\hfill
\begin{minipage}{0.47\textwidth}
\begin{algorithm}[H]
	\caption{FallingPhase($\mathcal{S}'$)} \label{fallingalg}
    $\mathcal{S} \leftarrow \emptyset$\;
    \For{$s' \in \mathcal{S}'$}{
        \If{$s'$ is ND}{
            $j = \lfloor \log_{\beta} |\cov(s')| \rfloor$\;
            Drop $s'$ to level $j$\;
            Recreate $\cov(s')$\;
            \For{$e' \in \cov(s')$}{
                \For{$s \ni e'$}{
                    $\mathcal{S} \leftarrow \mathcal{S} \cup \{s\}$\;
                }
            }
        }
    }
    \If{$\mathcal{S} \neq \emptyset$}{
        RisingPhase($\mathcal{S}$)\;
    }
\end{algorithm}
\end{minipage}
\end{nolinenumbers}

\vspace{10pt}

\noindent The local algorithm maintains \Cref{invlocal} through two mutually recursive procedures, the \emph{rising phase} and the \emph{falling phase}. In the rising phase, we iterate through the sets in $\mathcal{S}$, an input collection of sets, and identify any that are $j$-PD, processing from the highest level $j$ down to the lowest. When a $j$-PD set $s$ is found, we ``clean'' it by raising its level to $j+1$ and reassigning its elements such that $\cov(s) = N_j(s)$. After processing all levels, there are no more $j$-PD sets (for any $j$), and we identify every element $e'$ whose level changed. Because these elements may have left their original $\cov$ sets, those sets may now be ND. We collect these potentially ND sets into a collection $\mathcal{S}'$ and initiate a falling phase. In the falling phase, we examine the sets in $\mathcal{S}'$. Any ND set $s'$ is dropped to level $\lfloor \log_{\beta} |\cov(s')| \rfloor$, thus there are no ND sets at the end of this process. This level shift causes elements $e'$ to drop to a lower level, which in turn might make any set containing $e'$ become $j$-PD. Consequently, we collect all sets containing these elements into a collection $\mathcal{S}$ and trigger a new rising phase. This process continues until termination, where eventually our system obeys \Cref{invlocal}. 

An insertion of $e$ triggers an initial rising phase with $\mathcal{S} = \mathcal{S}(e)$ (the sets containing $e$), while a deletion of $e$ triggers an initial falling phase with $\mathcal{S}' = \{\asn(e)\}$. The logic for these procedures is detailed in Algorithms \ref{localalg}, \ref{risingalg}, and \ref{fallingalg}. While \cite{gupta2017online} demonstrates that these ``clean-up'' procedures terminate with efficient amortized update times for very large values of $\beta$, such values are impractical for most real-world instances. Since we employ significantly lower values—typically $\beta < 2$—we provide a proof that this process is guaranteed to terminate.

\begin{claim} \label{terminate}
\Cref{localalg} terminates after every update step.
\end{claim}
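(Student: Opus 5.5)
We will prove termination with a potential function that is bounded and strictly increases with every level change made by the clean-up procedures. The rising and falling phases alternate, so each phase must make at least one change of $\lev(s)$ or of the $\asn(e)$ values, or else the recursion stops. The number of elements is at most $n$ and the number of levels is finite, so there are only finitely many configurations (assignments plus levels). Hence it suffices to find a potential that strictly increases across every non-trivial phase; then no configuration repeats and the process terminates.

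The candidate potential is lexicographic, not a single weighted sum. Let $v$ be the vector whose $i$-th coordinate counts the elements at level $i$, ordered from the highest level $\lceil \log_\beta n\rceil$ down to $0$, and compare such vectors lexicographically. We check each operation in turn:
\begin{itemize}
\item \emph{Rise.} A $j$-PD set $s$ is raised to level $j+1$ and takes over $N_j(s)$. At least $\beta^{j+1}$ elements move from levels $<j$ to level $j+1$. The elements of $s$ at level $\ge j+1$ are untouched, since $\cov(s)$ is redefined only over $N_j(s)$ (we must check that elements of the old $\cov(s)$ at level $j$ are handled consistently). Every moved element ends up higher, so the first changed coordinate of $v$ increases.
\item \emph{Drop.} An ND set $s'$ falls from level $\ell$ to level $\lfloor \log_\beta |\cov(s')|\rfloor \le \ell-2$ and lowers $|\cov(s')| < \beta^{\ell-1}$ elements. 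This decreases $v$ lexicographically, so a falling phase breaks monotonicity.
\end{itemize}

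To fix this, we charge each fall to the rise that caused it. A set becomes ND only because elements were taken from it by rises, or because of deletions; deletions happen only at the start of an update and are a one-time event. We would therefore pair each rising phase with the falling phase it triggers. Take the highest level $j+1$ reached by a rise in that rising phase. The subsequent falls only lower sets whose level was at most the level of the stolen elements, which is $<j$ for those elements. We need to argue that every drop concerns sets at levels $\le j+1$ and does not reduce the count at level $j+1$ or above below what it was before the rising phase. If the dropped set itself sat at level $\ge j+1$, it lost elements only to rises above it, which contradicts the choice of $j$. This would show that a combined rise-then-fall round strictly increases the top part of $v$ while only perturbing strictly lower coordinates, so $v$ increases lexicographically per round.

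The main obstacle is this pairing argument. In particular, a drop from exactly level $j+1$ could be caused by a rise to level $j+1$ stealing elements at level $j$, since $N_{j+1}$ includes level $j$. In that case we would need a finer quantity. The fallback is the potential $\Phi=\sum_e \beta^{c\cdot\lev(e)}$ with a large constant $c$. We would compare the gain of at least $\beta^{j+1}$ elements each rising by at least one level against the loss of fewer than $\beta^{\ell-1}$ elements each falling at most $\ell$ levels, and show that for a suitable $c$ (depending on $\beta$) the gain dominates even for $\beta<2$. If neither approach closes, we would instead bound the number of times any element can be raised to each level between drops, using finiteness of configurations plus acyclicity.
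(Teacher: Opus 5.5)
Your main argument, comparing the level-count vector $v$ lexicographically across rise-then-fall rounds, is sound. The gap is that you leave its key step open because of a misreading, and the fallbacks you offer do not repair it.

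A rise to level $j+1$ happens only for a $j$-PD set $s$, and it reassigns $N_j(s)$, i.e.\ elements at levels strictly below $j$. Level-$j$ elements can only be taken by rises to level $\ge j+2$ (via $N_{j+1}(s)$ or higher). So the obstacle you describe, a drop from level $j$ or $j+1$ caused by the top rise, cannot occur.

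Since $\lev(e)=\lev(\asn(e))$, every previous owner entering $\mathcal{S}'$ sits at a level $<j$. The lower rises of the same phase steal from even lower levels. A set that loses elements and later rises in the same phase ends with $|\cov(s)|\ge\beta^{\lev(s)}$, so it is not ND.

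Hence the falling phase only moves elements from levels $<j$ downward. The coordinates of $v$ at levels $\ge j$ are untouched by it. Coordinate $j+1$ strictly grew during the rising phase, and nothing above $j+1$ changed. For the old level-$j$ members of $\cov(s)$ that you flag, the only sensible reading is that they move up with $s$, which only adds to coordinate $j+1$; the paper's proof glosses over the same point. So $v$ increases lexicographically every round, and finiteness plus the one-off initial fall of a deletion gives termination.

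Your weighted fallback $\Phi=\sum_e\beta^{c\lev(e)}$ with $c$ depending only on $\beta$ does not work. You compare one rise against one drop, but a single rise can trigger many drops. Take a rise that steals $k\ge\beta^{j+1}$ elements from $k$ distinct owners at level $j-1$, each holding exactly $\lceil\beta^{j-2}\rceil$ elements. It makes all $k$ owners ND, and the resulting loss exceeds the gain unless roughly $\beta^{2c}\gtrsim\beta^{j-2}$, i.e.\ $c=\Omega(\log_\beta n)$. At that point $\Phi$ merely encodes your lexicographic order.

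The paper rests on the same locality fact but tracks a different measure: the largest $j$ such that some set is $j$-PD. The drops after a rising phase with top level $j$ move only elements out of levels $<j$, so they can increase $|N_{j'}(\cdot)|$ only for $j'\le j-1$, while rises only shrink these counts. Thus the next rising phase has top PD level at most $j-1$, and the mutual recursion ends after at most $\lceil\log_\beta n\rceil+1$ rounds. This needs no potential function and bounds the recursion depth explicitly. Your lexicographic potential only bounds the number of rounds by the number of possible level-count vectors, which still suffices for the termination claim as stated.
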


\begin{proof}
Consider a rising phase where $j$ is the maximum level such that some set $s \in \mathcal{S}$ is $j$-PD. Since the highest possible rise is to level $j+1$, only elements at levels $< j$ can be reassigned. Consequently, any set entering $\mathcal{S}'$ for the subsequent falling phase must be at a level strictly less than $j$. During that falling phase, the count $|N_{j'}(s')|$ for any $s' \in \mathcal{S}'$ can only increase for levels $j' < j$. It follows that in the next rising phase, the maximum level $j'$ for which any set can be $j'$-PD is at most $j-1$. Thus, the highest rising level decreases with each mutual recursion, eventually terminating.
\end{proof}

\subsection{The ``Partial'' Algorithm} \label{partialsec}

In this section we introduce the ``partial'' algorithm, a simplification of \cite{solomon2023dynamic} that maintains a \emph{global} relaxation of the first property alongside a \emph{local} relaxation of the second property in \Cref{thm}. In addition to the terms in \Cref{localdef}, we define the following:

\begin{definition} \label{partialdef}
For each level $j$, let $\mathcal{D}_j$ be a ``dirt counter''. Let $\mathcal{D} = \sum_j \mathcal{D}_j$ be the total dirt across all levels. Whenever an element leaves some $\cov(s)$, we increment $\mathcal{D}_{\lev(s)}$ by $\beta^{-\lev(s)}$.
\end{definition}

\noindent Upon insertion of element $e$ where $\mathcal{S}(e)$ is the collection of sets containing it, it is assigned to the set $s \in \mathcal{S}(e)$ with the highest level $\lev(s)$. The partial algorithm maintains the following invariant:

\begin{invariant} \label{invpartial}
No set is $j$-PD for any level $j$, and $\mathcal{D} < \frac{\beta - 1}{\beta} \cdot |\mathcal{C}|$.
\end{invariant}

\noindent The dirt counters serve as a global relaxation of the ``no ND sets'' invariant. A low total dirt $\mathcal{D}$ implies that sets are not ND ``on average'' in a sense. However, the system accumulates dirt in two ways: when an element is deleted, and when a set becomes $j$-PD and rises, forcing elements to leave their previous $\cov$ sets. Once the total dirt exceeds the threshold $\mathcal{D} \geq \frac{\beta - 1}{\beta} \cdot |\mathcal{C}|$, the algorithm triggers a partial rebuild. Instead of a full reset, we identify a \emph{critical level} $i_{\mathsf{crit}}$ and run the static greedy algorithm only on the subsystem comprising elements and sets at or below level $i_{\mathsf{crit}}$. This process ``cleans'' the subsystem, allowing us to reset all dirt counters up to $i_{\mathsf{crit}}$ to zero. We will not define here what the critical level exactly is or how to find it, but intuitively it is chosen to maximize the ``dirt-to-cost'' ratio. It essentially is the highest level where the accumulated dirt is disproportionately large relative to the subsystem's size, thus maximizing the progress made toward a ``clean'' system while minimizing the computational overhead of the static greedy re-run. See \cite{solomon2023dynamic} for the formal definition of the critical level, along with the proof for the following lemma:

\begin{lemma} [Lemma 39 in \cite{solomon2023dynamic}] \label{lempartial}
Any solution satisfying \Cref{invpartial} has approximation factor $\beta^{O(1)} \cdot (\ln n + 1)$.
\end{lemma}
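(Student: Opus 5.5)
We will follow the charging template of \Cref{lemapprox}, with the global dirt bound replacing the per-set guarantee of \cref{obsnoND}. Set $q_e = \beta^{-\lev(e)}$ for every $e \in \mathcal{U}$.

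\emph{Lower bound on the optimum.} Since no set is $j$-PD, every $s \in \mathcal{F}$ and every level $j$ satisfy $|N_j(s)| < \beta^{j+1}$. Repeating \cref{claimapprox} with this extra factor of $\beta$ gives $q_{e_j} \leq \frac{\beta^2}{x_i - j + 1}$ for each set $s_i$ of an optimal cover. Hence
\[
\sum_{e \in \mathcal{U}} q_e \leq \beta^2 (\ln n + 1)\cdot k,
\]
exactly as in Equations \ref{eq2} and \ref{eq3}.

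\emph{Upper bound on $|\mathcal{C}|$.} This is the step where the global relaxation must be used. We will prove, for every $s \in \mathcal{C}$,
\[
1 \leq \beta \cdot \Big(\sum_{e \in \cov(s)} q_e + (\text{dirt charged to } s)\Big),
\]
where the dirt charged to $s$ is $\beta^{-\lev(s)}$ times the number of elements that left $\cov(s)$ since $s$ last entered its current level. The argument goes through the moment $s$ was placed at level $\lev(s)$.
\begin{itemize}
\item If $s$ was placed by a static greedy (re)build, then $|\cov(s)| \geq \beta^{\lev(s)}$ held at that moment, by \cref{obsnoND}.
\item If $s$ rose from being $j$-PD to level $j+1$, then it received $|N_j(s)| \geq \beta^{j+1}$ elements, so again $|\cov(s)| \geq \beta^{\lev(s)}$ at that moment.
\item If $s$ became nonempty through an insertion (assigned at the highest level of a containing set), we must argue separately, e.g.\ via the partial-rebuild placement or via the level of $s$ being already justified.
\end{itemize}
Afterwards, $\cov(s)$ can only grow through insertions and only shrink through departures. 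Each departure adds $\beta^{-\lev(s)}$ to $\mathcal{D}_{\lev(s)}$, so $\beta^{\lev(s)} \leq |\cov(s)| + \#\text{departures}$, which gives the displayed inequality. One caveat: dirt counters are reset only by rebuilds, which also replace the levels of all affected sets, so the attribution stays consistent.

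\emph{Combining.} Summing over $\mathcal{C}$ and using that the counters only overcount the dirt attributable to current sets,
\[
|\mathcal{C}| \leq \sum_{e} q_e + \mathcal{D} < \sum_e q_e + \tfrac{\beta-1}{\beta}|\mathcal{C}|,
\]
so $|\mathcal{C}| \leq \beta \sum_e q_e \leq \beta^3(\ln n+1)\,k$.

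\emph{Main obstacle.} The hard part is the per-set accounting. We must justify that every set in $\mathcal{C}$ had a valid initial certificate $|\cov(s)| \geq \beta^{\lev(s)}$, up to a factor $\beta$, especially for sets that gained elements only through insertions, and that partial rebuilds reset dirt consistently with the sets they relocate. If the insertion case fails, our fallback is to charge such sets to the inserted element itself, whose $q_e \geq \beta^{-\lev(s)}$. This costs only an extra $\beta^{O(1)}$ factor, which is permitted by the statement.
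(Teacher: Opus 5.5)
The paper does not prove this lemma itself; it defers to Lemma 39 of \cite{solomon2023dynamic}. Your architecture is the natural one. You use dual fitting with $q_e=\beta^{-\lev(e)}$, and the no-PD condition gives $\sum_e q_e\le\beta^2(\ln n+1)\,k$. Each $s\in\mathcal{C}$ then carries a certificate $|\cov(s)|\ge\beta^{\lev(s)}$ from the moment it entered its current level, and the erosion of that certificate is paid for by $\mathcal{D}$. The static-greedy and rising cases are correct. The combination step is also correct, provided you state the per-set inequality as $1\le\sum_{e\in\cov(s)}q_e+\mathrm{dirt}(s)$. None of your cases loses a factor $\beta$, and with the extra $\beta$ you would only get $(2-\beta)|\mathcal{C}|<\beta\sum_e q_e$.

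\textbf{The gap is the insertion case, and neither route you suggest closes it.} The fallback fails quantitatively. If $s$ sits at level $\ell$, its inserted element has $q_e=\beta^{-\ell}$, so paying the unit cost of $s$ from it loses a factor $\beta^{\ell}$. That factor can be $\Theta(n)$, not $\beta^{O(1)}$.

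Nor is the level of $s$ automatically ``already justified'', because \Cref{invpartial} evaluated on the current configuration cannot by itself imply the lemma. Take $m<\beta^{\ell+2}$ sets $s_1,\dots,s_m$ at level $\ell$ with $\cov(s_i)=\{e_i\}$, all $e_i$ also lying in one set $T\notin\mathcal{C}$, and $\mathcal{D}=0$. No set is $j$-PD and the dirt bound holds. Yet $|\mathcal{C}|=m$ against an optimum of $1$, and $m$ can be as large as $n$. This can arise if a set emptied at level $\ell$ keeps that level through a rebuild that zeroes $\mathcal{D}_\ell$ without touching it (it meets no element of $\mathcal{E}$), and afterwards receives an insertion.

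\textbf{What is missing is the convention of \Cref{deflev}:} a set with $\cov(s)=\emptyset$ is outside $\mathcal{C}$ and has level $-1$. Then an inserted $e$ goes to a previously empty set only when every set in $\mathcal{S}(e)$ is at level $-1$. That set enters at level $0$ (or $-1$), and $|\cov(s)|=1\ge\beta^{\lev(s)}$ is a genuine certificate with zero dirt. An insertion into a set already in $\mathcal{C}$ only increases $|\cov(s)|$.

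You also need to justify the reset step rather than assert it. At a rebuild, every $s\in\mathcal{C}$ with $\lev(s)\le i_{\mathsf{crit}}$ has $\emptyset\neq\cov(s)\subseteq\mathcal{E}$, so it lies in the rebuilt collection and is re-placed by StaticGreedy with a fresh certificate. Hence zeroing $\mathcal{D}_i$ for $i\le i_{\mathsf{crit}}$ never erases dirt that backs a surviving certificate. With these two facts, $\sum_{s\in\mathcal{C}}\mathrm{dirt}(s)\le\mathcal{D}$ holds, and your argument gives the factor $\beta^3(\ln n+1)$.
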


\noindent While \cite{solomon2023dynamic} achieves an amortized update time of $O \left( \frac{f \cdot \log n}{\text{poly}(\beta - 1)}\right)$, this bound requires several complex procedures—specifically for locating the critical level and maintaining $N_j(s)$ for all $j$ and $s$. Although these procedures are vital for the theoretical $\log n$ factor, they significantly overcomplicate the algorithm's practical implementation. Consequently, we employ a simplified version that achieves a slightly higher amortized update time of $O \left( \frac{f \cdot \log^2 n}{\text{poly}(\beta - 1)}\right)$. Notably, our implementation retains the amortized recourse of $O \left( \frac{1}{\text{poly}(\beta - 1)}\right)$ established in \cite{solomon2023dynamic}. \Cref{partialalg} provides the pseudo-code for a single update step of this algorithm.

\vspace{10pt}

\begin{nolinenumbers}
\noindent 
\begin{minipage}{0.48\textwidth}
\begin{algorithm}[H]
	\caption{UpdatePartial($e$,$\mathcal{S}(e)$)} \label{partialalg}
    \If{insertion}{
        $\mathcal{U} \leftarrow \mathcal{U} \cup \{e\}$\;
        $s = \mathsf{argmax} \{\lev(s') : s' \in \mathcal{S}(e)\}$\;
        $\mathcal{C} \leftarrow \mathcal{C} \cup \{s\}$\;
        $\cov(s) \leftarrow \cov(s) \cup \{e\}$\;
        \If{exist PD sets $s' \in \mathcal{S}(e)$}{
            Rise highest level PD set\;
            \For{each rising element $e'$}{
                $l' \leftarrow$ previous $\lev(e')$\;
                $\mathcal{D}_{l'} \leftarrow \mathcal{D}_{l'} + \beta^{-l'}$\;
            }  
        } 
    }
    \Else{
        $\mathcal{U} \leftarrow \mathcal{U} \setminus \{e\}$\;
        $s = \asn(e)$\;
        $\cov(s) \leftarrow \cov(s) \setminus \{e\}$\;
        $\mathcal{D}_{\lev(s)} \leftarrow \mathcal{D}_{\lev(s)} + \beta^{-\lev(s)}$\;
    }
    \If{$\mathcal{D} \geq \frac{\beta - 1}{\beta} \cdot |\mathcal{C}|$}{
        Find critical level $i_{\mathsf{crit}}$\;
        $\mathcal{E} \leftarrow \{e' \mid \lev(e') \leq i_{\mathsf{crit}}\}$\;
        $\mathcal{S} \leftarrow \{s' \in \mathcal{F} \mid s' \cap \mathcal{E} \neq \emptyset\}$\;
        $\mathcal{C}' \leftarrow$ StaticGreedy($\mathcal{E}$,$\mathcal{S}$,$\beta$)\;
        $\mathcal{D}_i \leftarrow 0$, $\forall i \leq i_{\mathsf{crit}}$\;
    }
\end{algorithm}
\end{minipage}
\hfill
\begin{minipage}{0.48\textwidth}
\begin{algorithm}[H]
	\caption{UpdateGlobal($e$,$\mathcal{S}(e)$)} \label{globalalg}
    \If{insertion}{
        $\mathcal{U} \leftarrow \mathcal{U} \cup \{e\}$\;
        $s = \mathsf{argmax} \{\lev(s') : s' \in \mathcal{S}(e)\}$\;
        $\mathcal{C} \leftarrow \mathcal{C} \cup \{s\}$\;
        $\cov(s) \leftarrow \cov(s) \cup \{e\}$\;
        \For{$i$ from $\lev(e)$ to $\lceil \log_{\beta} n \rceil$}{
            $\mathcal{P}_i \leftarrow \mathcal{P}_i + 1$\;
        }
    }
    \Else{
        \For{$i$ from $\lev(e)$ to $\plev(e) - 1$}{
            $\mathcal{A}_i \leftarrow \mathcal{A}_i - 1$; \: $\mathcal{D}_i \leftarrow \mathcal{D}_i + 1$\;
        }
        \For{$i$ from $\plev(e)$ to $\lceil \log_{\beta} n \rceil$}{
            $\mathcal{P}_i \leftarrow \mathcal{P}_i - 1$; \: $\mathcal{D}_i \leftarrow \mathcal{D}_i + 1$\;
        }
        $\mathcal{U} \leftarrow \mathcal{U} \setminus \{e\}$\;
        $s = \asn(e)$\;
        $\cov(s) \leftarrow \cov(s) \setminus \{e\}$\;
    }
    \If{$\exists i \mid \mathcal{P}_i + \mathcal{D}_i > 2(\beta - 1) \cdot \mathcal{A}_i$}{
        Find highest $i$, denote by $i_{\mathsf{crit}}$\;
        $\mathcal{E} \leftarrow \{e' \mid \lev(e') \leq i_{\mathsf{crit}}\}$\;
        $\mathcal{S} \leftarrow \{s' \in \mathcal{F} \mid s' \cap \mathcal{E} \neq \emptyset\}$\;
        $\mathcal{C}' \leftarrow$ StaticGreedy($\mathcal{E}$,$\mathcal{S}$,$\beta$)\;
        Update $\mathcal{A}_i$, $\mathcal{P}_i$, $\mathcal{D}_i$, $\forall i \leq i_{\mathsf{crit}}$\;
    }
\end{algorithm}
\end{minipage}
\end{nolinenumbers}

\subsection{The ``Global'' Algorithm} \label{globalsec}

In this section we introduce the ``global'' algorithm, a simplification of \cite{10756164} that maintains a \emph{global} relaxation of both properties in \Cref{thm}. We define the following:

\begin{definition} \label{globaldef}
Each element $e$ is assigned a ``passive level'', $\plev(e)$, such that $\plev(e) \geq \lev(e)$ at all times. Over the lifespan of $e$, $\plev(e)$ is monotonically non-decreasing. For each level $i$, let $\mathcal{A}_i = |\{e \in \mathcal{U} \mid \lev(e) \leq i < \plev(e)\}|$, $\mathcal{P}_i = |\{e \in \mathcal{U} \mid \plev(e) \leq i\}|$ and let $\mathcal{D}_i$ be a dirt counter that counts the number of deleted elements $e$ such that $\lev(e) \leq i$ when deleted.
\end{definition}

\noindent Upon insertion of element $e$ where $\mathcal{S}(e)$ is the collection of sets containing it, it is assigned to the set $s \in \mathcal{S}(e)$ with the highest level $\lev(s)$,
and we set $\plev(e) = \lev(e)$, which raises $\mathcal{P}_i$ for all $i \geq \lev(e)$. When $e$ is deleted, we update all $\mathcal{A}_j$, $\mathcal{P}_j$ and $\mathcal{D}_j$ according to their respective definitions. Intuitively, for any level $j$, $\mathcal{P}_j$ and $\mathcal{D}_j$ track ``dirt'' from insertions and deletions, respectively, while $\mathcal{A}_j$ serves as a ``clean'' counter. The algorithm maintains the following global invariant:

\begin{invariant} \label{invglobal}
For any level $i$, $\mathcal{P}_i + \mathcal{D}_i \leq 2(\beta - 1) \cdot \mathcal{A}_i$.
\end{invariant}

\noindent Once \Cref{invglobal} is violated, the algorithm triggers a partial rebuild, exactly like in the partial algorithm, up to a \emph{critical level} $i_{\mathsf{crit}}$, which here is the highest violating level. This process ``cleans'' the subsystem, setting the passive level of each element $e$ participating in the rebuild to be $\max \{\plev(e) , i_{\mathsf{crit}} + 1\}$,
causing elements to leave $\mathcal{P}_i$ and join $\mathcal{A}_i$, and sets $\mathcal{D}_i = 0$ for all $i \leq i_{\mathsf{crit}}$. 
Since we implement this algorithm with $\beta \geq 1.25$, the approximation guarantees from \cite{10756164} are no longer valid. In addition, as with the partial algorithm, we prioritize a practical implementation over the more complex version in \cite{10756164} that targets optimal asymptotic bounds. Consequently, our algorithm achieves an amortized update time of $O \left( \frac{f \cdot \log^2 n}{\text{poly}(\beta - 1)}\right)$. \Cref{globalalg} details the pseudo-code for a single update step. See Section 2.2.1 in \cite{10756164} for more details regarding the theoretical amortized version.

\bigskip

\noindent We summarize \Cref{levelsec} in \Cref{comp2}. 

\begin{table}[h!]
\centering
	\caption{Summary of the practical variants of the level-based algorithms, with their guarantees.}\label{comp2}
	\begin{tabular}{|c|c|c|c|c|c|c|}
		\hline
		Ref.	&	Practical Variant &	Approx. Factor	&	Update Time	&  Recourse \\\hline
        \cite{gupta2017online}	&	Local Algorithm	&	$\beta^3 \cdot (\ln n + 1)$	&	-	& -	\\\hline
		\cite{solomon2023dynamic}	&	Partial Algorithm	&	-	&	$O \left( \frac{f \cdot \log^2 n}{\text{poly}(\beta - 1)}\right)$ & $O \left( \frac{1}{\text{poly}(\beta - 1)}\right)$	\\\hline
		\cite{10756164}	&	Global Algorithm	&	-	&	$O \left( \frac{f \cdot \log^2 n}{\text{poly}(\beta - 1)}\right)$	& - \\\hline
	\end{tabular}
\end{table}

%% file: sec5.tex
\section{Experimental Evaluation} \label{expsec}

\subparagraph{Instances.}
We evaluated our four algorithms on 120 real-world hypergraph instances. We treat the dynamic set cover problem as a dynamic vertex cover problem on hypergraphs, where dynamic hyperedges represent elements and static vertices represent sets. We started out with hundreds of static hypergraph instances \cite{Benson-2018-subset,10.1145/2049662.2049663,ni2019justifying}. To ``dynamize'' a static instance with $x$ elements, we generated a dynamic sequence of length $k=2x$, ensuring each element is inserted and deleted exactly once (staring and ending with an empty system). Insertions follow the original static ordering. We enforce a ``capacity constraint'' of $n = \lfloor x/10 \rfloor$ active elements. While the system is below capacity, we perform an insertion with 80\% probability and a deletion of one of the five most recently inserted elements with 20\% probability. Once capacity $n$ is reached, we trigger a ``cleanup'' phase, deleting the $d$ oldest active elements, where $d \sim \text{Uniform}(1, \lfloor n/10 \rfloor)$. This model specifically evaluates the performance against a mix of ``fresh'' deletions, ``stale'' deletions, and varying batch sizes. Finally, we filtered the collection to include only high-frequency instances ($f \geq \ln n$), resulting in the 120 instances used for benchmarking. See Tables \ref{table1}, \ref{table2} and \ref{table3} for basic statistics of these instances.

\vspace{-5pt}

\subparagraph{Methodology.}
We implemented the algorithms using C++ and compiled them using gcc 11 with full optimization (-O3 flag). The experimental evaluation was conducted on an AMD EPYC 7763 processor (64 cores, 2.45GHz) with 512MB of L3 cache and 256GB of main memory. We executed up to 32 experiments in parallel. 

To reflect the ``amortized nature'' of the algorithms, we used three quality metrics, all amortized: set cover size, update time (processing and execution), and recourse. We measured these values for every update step of each individual experiment. The final quality metrics per experiment were calculated by averaging these values over the entire update sequence.
For each of the four algorithms we chose 17 different values for $\beta$, to test the tradeoff between our quality metrics. For the smallest 115 instances (of the 120) we ran each algorithm and $\beta$ combination five times (taking an average), totaling to $115 \cdot 4 \cdot 17 \cdot 5 = 39,100$ individual experiments. Due to the longer running time of the last 5 instances, we reduced the number of runs per combination and the number of different $\beta$ values per algorithm, and set a time limit of 48 hours for each individual run. Overall there were roughly $40,000$ experiments run. 

We compare our algorithms using \emph{performance profiles} \cite{dolan2004benchmarkingoptimizationsoftwareperformance}. Given a quality metric and a collection of algorithms, this plot relates for each algorithm the \% of instances where it is within a $\tau$ factor of the best algorithm; the y-axis represents the \% of instances and the x-axis represents $\tau$. For example, if the ``partial'' algorithm reaches 70\% at $\tau = 2.5$ in the amortized update time profile, this means that in 70\% of the instances, the update time of the ``partial'' algorithm is within a $2.5$ factor of the best (in this case fastest) algorithm.

\vspace{-5pt}

\subparagraph{Results.}
Since we tested 17 different values of $\beta$ for each of the four algorithms, a single visualization of all 68 variants would be cluttered. We therefore adopted a two-step analysis. In the first step, we selected six $\beta$ values per algorithm to conduct an intra-algorithm comparison of the impact of $\beta$. The performance profiles for each of these algorithms are displayed in \Cref{figpp4} (in \Cref{secapp}), and they confirm the expected correlations: $\beta$ is strongly positively correlated with the amortized set cover size and strongly negatively correlated with amortized update time and recourse across all algorithms. Furthermore, the results highlight the relative sensitivity of each metric to $\beta$; the impact is most pronounced for update time (high $\tau$ values), followed by recourse, and finally set cover size (low $\tau$ values).

Before proceeding to the second step, and to further investigate the trade-off between solution quality and computational efficiency, we provide a complementary analysis in \Cref{fig6}. Since the raw metrics vary significantly across our 120 instances, we normalize the results per instance to ensure an unbiased comparison. Specifically, for each instance and each metric separately, we identify the best-performing algorithm and $\beta$ combination. We then calculate the performance of all other combinations relative to these best-found values for each instance. Finally, we aggregate these normalized ratios across all instances using the geometric mean. \Cref{fig6} displays the resulting quality vs. efficiency curves for the four algorithms across the full range of $\beta$ values. These curves explicitly visualize the negative correlation between solution quality and computational efficiency. Furthermore, consistent with \Cref{figpp4}, they highlight the relative sensitivity of each metric to the parameter $\beta$.

\begin{figure}[h!]
	\center{\includegraphics[scale=0.32]{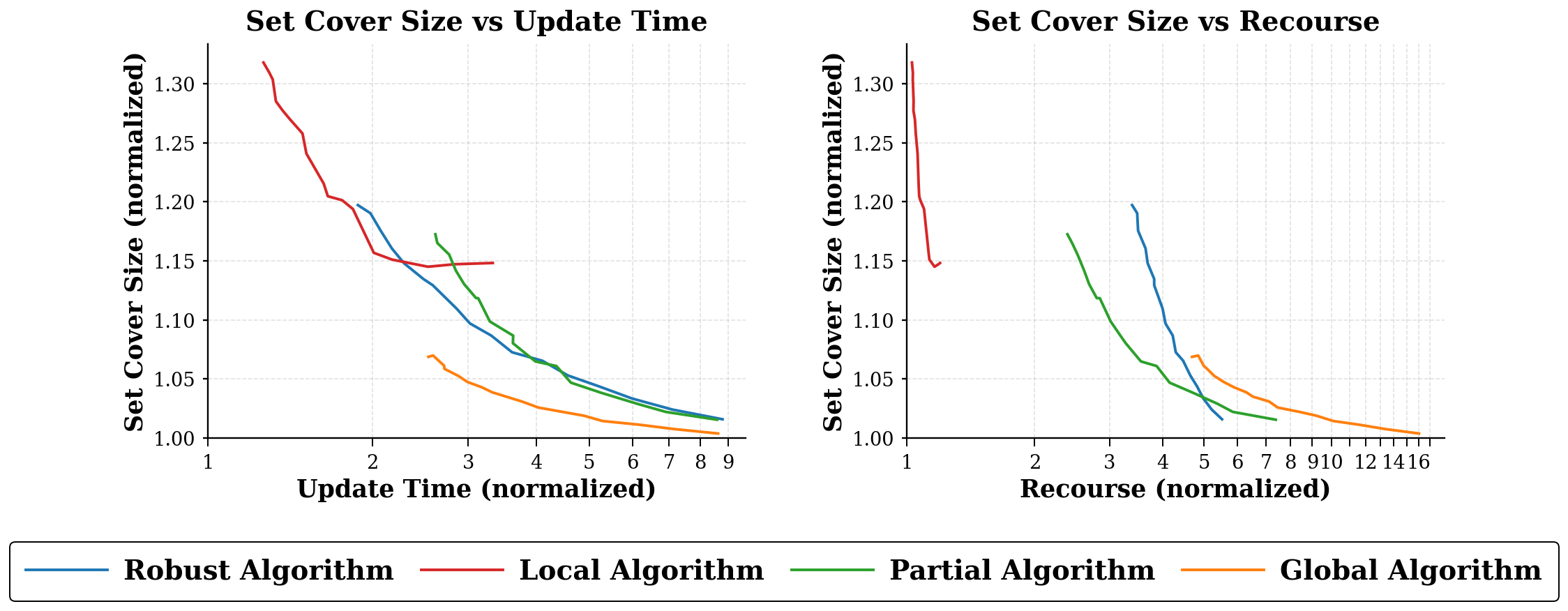}} 
    \caption{\label{fig6} Normalized trade-off curves representing solution quality vs. efficiency. The left plot displays mean normalized set cover size vs. mean normalized update time, and the right plot displays mean normalized set cover size vs. mean normalized recourse. Each curve represents the geometric mean across 120 instances for the full range of $\beta$ values; curves closer to the bottom-left corner indicate superior overall performance.}
\end{figure}

In the second step, to compare the four algorithms, we selected a representative $\beta$ for each by balancing the three quality metrics. While the optimal $\beta$ is application-dependent, we defined it as the value minimizing the objective function $g(s,t,r) = s \cdot t^{1/2} \cdot r^{1/2}$, where $s, t$, and $r$ represent amortized set cover size, update time, and recourse, respectively. We chose this function to balance solution size against the other two metrics, which are strongly positively correlated. 
For each algorithm, we identified the $\beta$ minimizing $g$ per instance and used the median value across all instances as the ``best'' $\beta$. The chosen $\beta$'s were 1.99, 1.9, 1.99 and 1.495 for the robust, local, partial and global algorithms respectively. \cref{figpp} presents the performance profiles comparing the four algorithms using these selected values.

\vspace{-2pt}

\begin{figure}[h!]
	\center{\includegraphics[scale=0.26]{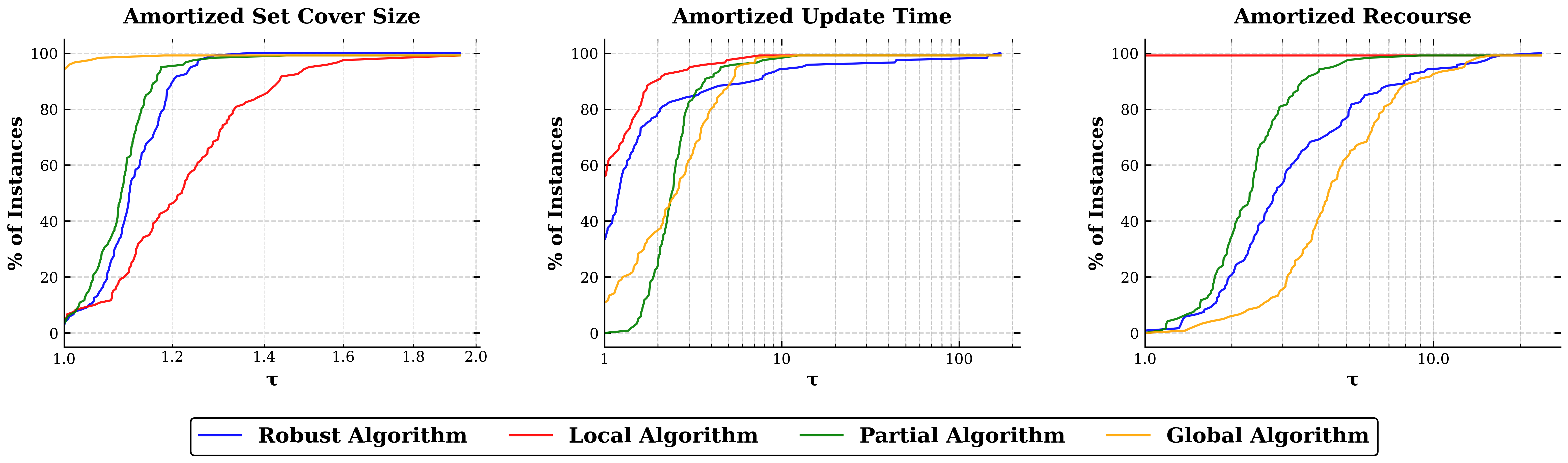}} 
    \caption{\label{figpp} Performance profiles for amortized set cover size (left), amortized update time (center) and amortized recourse (right) of all four algorithms, each with the chosen $\beta$.}
\end{figure}

\vspace{-5pt}

\noindent The global algorithm achieves the smallest average set cover in $\approx$ 95\% of instances ($\tau=1$). The partial and robust algorithms follow, while the local algorithm generally yields the lowest-quality results. For amortized recourse, the hierarchy is roughly reversed: the local algorithm always performs best (100\% at $\tau=1$), followed by the partial, robust, and global algorithms. Low recourse for the local algorithm is expected, as it maintains the solution by consistently ``patching'' it rather than triggering broader recomputations. Amortized update time reveals a more nuanced tradeoff. The local algorithm is generally the fastest, achieving the top result in $\approx$ 60\% of instances. While the partial algorithm never achieves the fastest time (0\% at $\tau=1$), it exhibits the steepest growth, indicating that it consistently remains near the best performance. Conversely, the robust algorithm is the fastest in one-third of instances but shows the least steep growth, failing to reach 100\% even at very high $\tau$ values, suggesting significant overhead in several instances. The global algorithm’s performance lies between these two. For both efficiency metrics (update time and recourse), the $\tau$ values are significantly higher than those for the quality metric (size). This indicates that solution quality is much more stable across algorithms than efficiency, mirroring the observed influence of $\beta$ on each algorithm's performance.
To summarize, generally the local algorithm offers the highest efficiency, the global algorithm the highest quality, and the robust and partial algorithms offer a balanced middle ground. All numeric results are in Tables \ref{table1}, \ref{table2}, and \ref{table3}. 

%% file: sec6.tex
\section{Conclusion} \label{concsec}

This paper provides the first empirical study of the dynamic set cover problem, bridging the gap between theory and practice. We implemented and evaluated four greedy-based algorithms: a novel and simple one, alongside three simplifications of state-of-the-art theoretical designs modified to prioritize practical efficiency over asymptotic bounds. Our evaluations focused on three amortized metrics: set cover size as a measure of quality, and update time and recourse as measures of efficiency. We assessed the tradeoff between these metrics for each algorithm, followed by a comparative analysis of the four algorithms.

\vspace{-5pt}

\subparagraph{Future Work.} 
A natural extension would be a comprehensive comparison of the four algorithms against a naive baseline that recomputes the set cover from scratch at every update. While we attempted this over several weeks by waiving the 48-hour time limit, the extreme computational overhead restricted successful runs to only the smaller instances. Preliminary results suggest that our dynamic algorithms provide drastic efficiency gains while maintaining competitive solution quality; however, full-scale benchmarking on the larger instances is required to fully verify these results.

While this study focuses on the \emph{unweighted} dynamic set cover problem for its accessible and intuitive presentation, a natural next step is the \emph{weighted} version. In this variant, each set has a weight and the goal is to maintain a minimum-weight cover. In practice, the weighted implementation would not differ significantly, as it would use the same core algorithmic procedures as a black box to maintain the solution.

A further compelling direction is the implementation of theoretical \emph{primal-dual}-based dynamic set cover algorithms designed for the low-frequency regime ($f < \ln n$). Evaluating their performance against the greedy-based algorithms presented here across all instance types (low-frequency and high-frequency) would clarify whether primal-dual approaches are indeed superior for low-frequency instances. Furthermore, such a comparison would help identify the practical transition point where greedy algorithms become the more effective choice.

%% file: secappendix.tex
\section{Appendix} \label{secapp}

\begin{figure}[h!]
	\center{\includegraphics[scale=0.26]{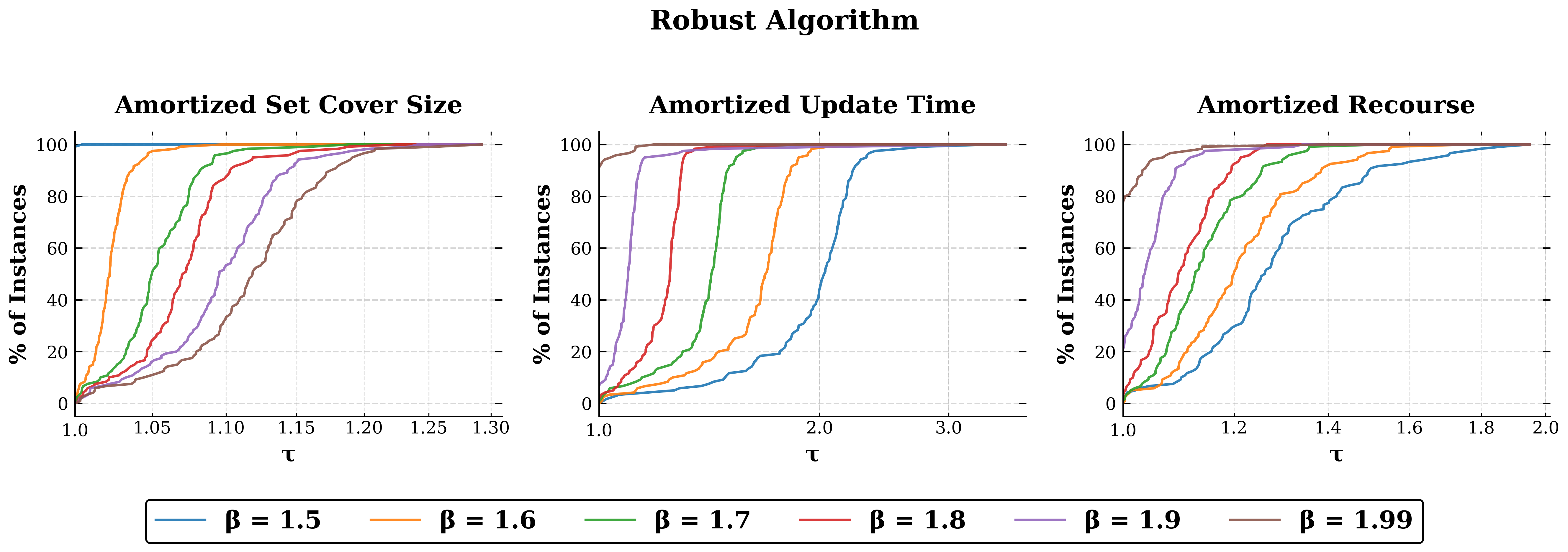}} 
\end{figure}

\vspace{-5pt}

\begin{figure}[h!]
	\center{\includegraphics[scale=0.26]{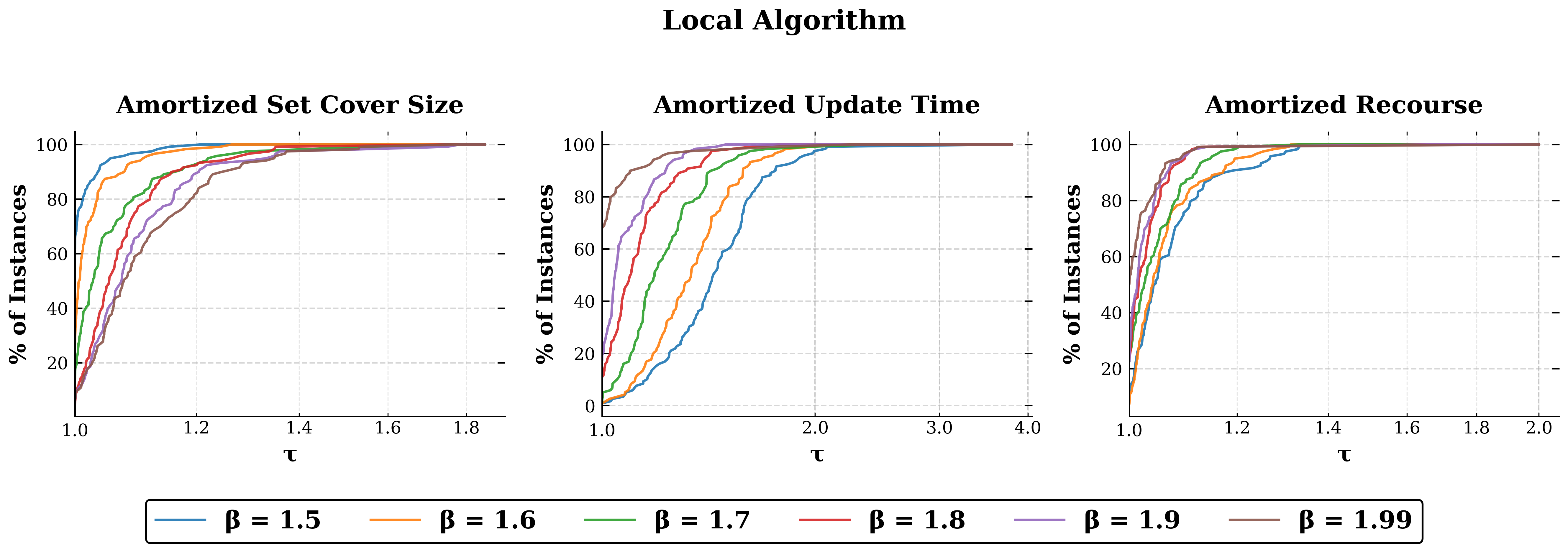}} 
\end{figure}

\vspace{-5pt}

\begin{figure}[h!]
	\center{\includegraphics[scale=0.26]{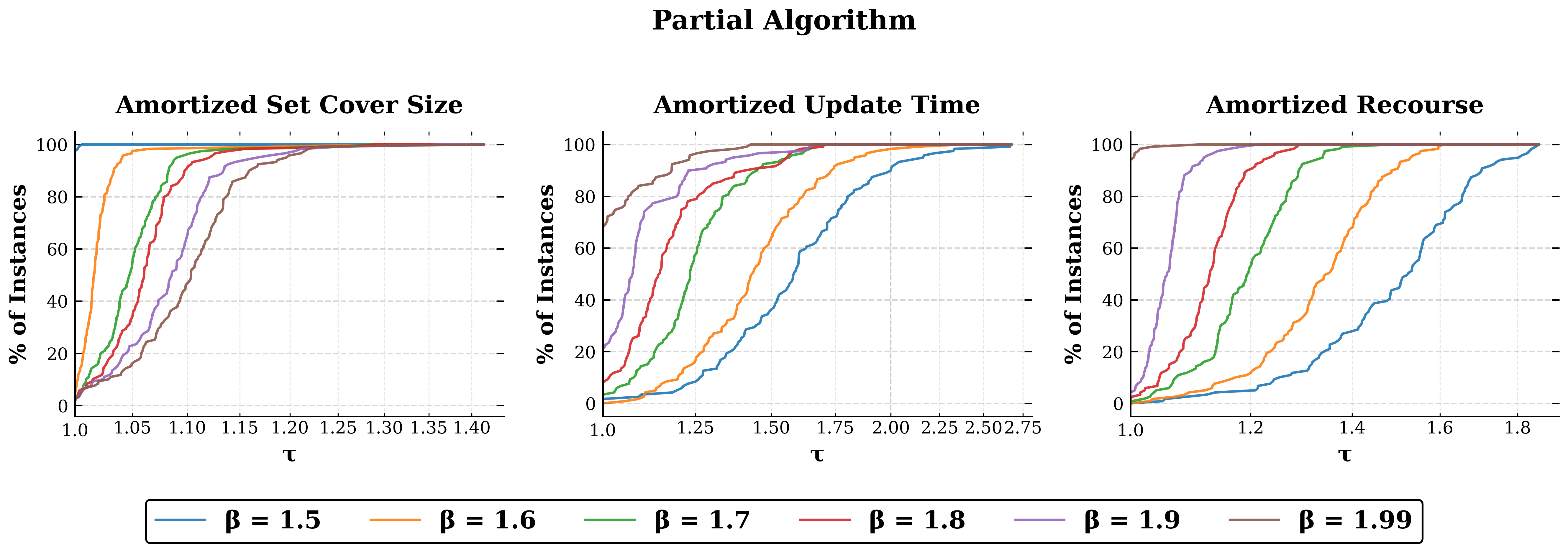}} 
\end{figure}

\vspace{-5pt}

\begin{figure}[h!]
	\center{\includegraphics[scale=0.26]{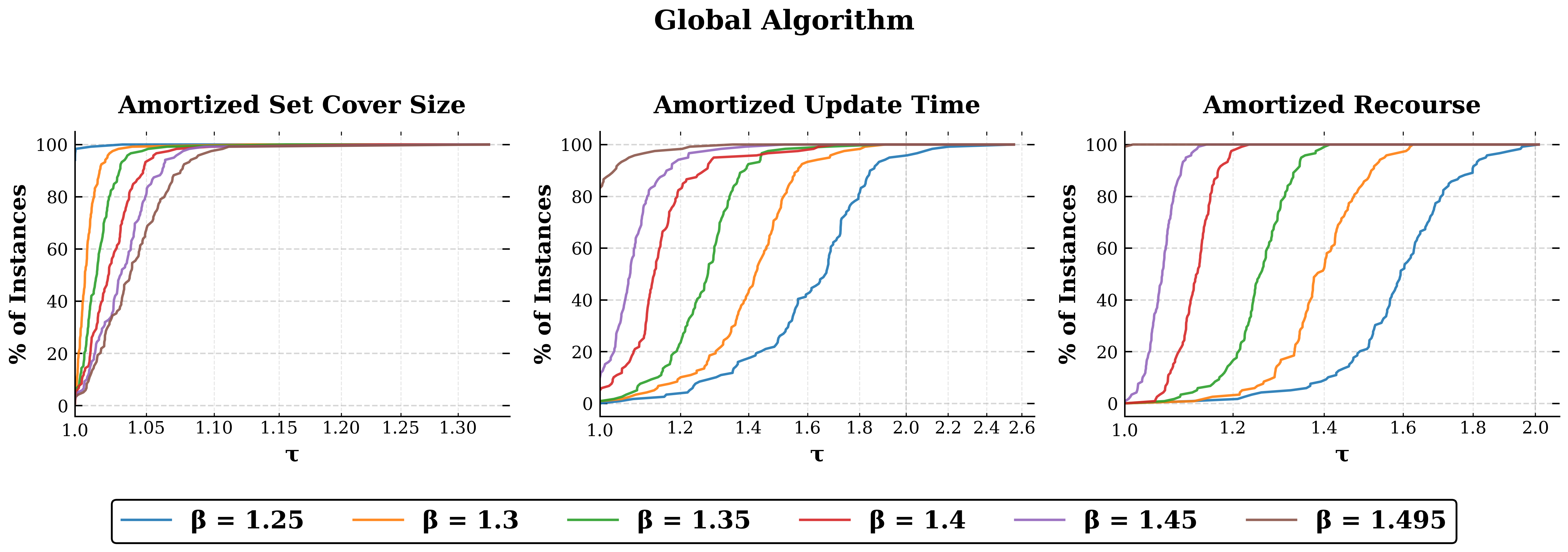}} 
    \caption{\label{figpp4} Performance profiles for each algorithm evaluated across all three quality metrics. Each row represents a specific algorithm (top to bottom: robust, local, partial, and global), divided into six variants with varying values of $\beta$. The columns represent amortized set cover size (left), update time (center), and recourse (right). Detailed legends for $\beta$ values are provided beneath each row.}
\end{figure}

\newpage

\newcolumntype{Y}{>{\centering\arraybackslash}X}

\setlength{\tabcolsep}{2.5pt}

\begin{sidewaystable}[p]
    \centering
    \footnotesize
    \caption{Statistics and results for the first 40 instances. All results are for the chosen $\beta$. $s$, $t$ and $r$ represent the amortized set cover size, update time and recourse, respectively. The subscripts rob, loc, par and glo represent the robust, local, partial and global algorithms, respectively.}
    \label{table1}
    \begin{tabularx}{\linewidth}{@{} l *{16}{Y} @{}}
        \toprule
        Ins. & $k$ & $n$ & $m$ & $f$ & $s_{\mathsf{rob}}$ & $t_{\mathsf{rob}}$(ns) & $r_{\mathsf{rob}}$ & $s_{\mathsf{loc}}$ & $t_{\mathsf{loc}}$(ns) & $r_{\mathsf{loc}}$ & $s_{\mathsf{par}}$ & $t_{\mathsf{par}}$(ns) & $r_{\mathsf{par}}$ & $s_{\mathsf{glo}}$ & $t_{\mathsf{glo}}$(ns) & $r_{\mathsf{glo}}$ \\ 
        \midrule
        1 & 5082 & 254 & 31022 & 969 & 287.753 & 5439.58 & 1.9264 & 229.997 & 5367.95 & 0.9976 & 229.973 & 23606.32 & 2.0256 & 229.989 & 17360.48 & 2.2735 \\
        2 & 6318 & 315 & 15905 & 1837 & 5.3107 & 4432988.0 & 0.7822 & 5.8957 & 295395.6 & 0.0332 & 5.1955 & 451731.4 & 0.082 & 4.4395 & 100355.76 & 0.132 \\
        3 & 9858 & 492 & 10595 & 4928 & 129.646 & 7225.73 & 1.0742 & 130.12 & 3672.21 & 0.3518 & 125.407 & 11980.24 & 1.016 & 117.572 & 14571.88 & 2.1319 \\
        4 & 12442 & 622 & 22687 & 64 & 388.074 & 4709.44 & 1.5224 & 423.81 & 3070.41 & 0.664 & 372.093 & 9170.11 & 1.3411 & 347.446 & 16667.04 & 2.3088 \\
        5 & 21162 & 1058 & 10581 & 1711 & 4.8038 & 597014.6 & 0.1294 & 5.5951 & 22575.16 & 0.0076 & 4.9958 & 27667.5 & 0.0248 & 4.5228 & 13734.0 & 0.0546 \\
        6 & 21210 & 1060 & 10605 & 147 & 493.559 & 58525.96 & 0.5902 & 402.969 & 13249.22 & 0.4395 & 402.346 & 22153.32 & 0.5204 & 400.905 & 13959.2 & 0.6058 \\
        7 & 21548 & 1077 & 10774 & 11 & 481.57 & 4013.7 & 1.4498 & 515.562 & 3626.16 & 0.5787 & 446.808 & 9093.77 & 1.4478 & 412.166 & 12453.46 & 2.5068 \\
        8 & 21696 & 1084 & 10848 & 723 & 53.3009 & 116051.8 & 0.4747 & 55.8195 & 33474.22 & 0.0745 & 54.2873 & 62388.62 & 0.2895 & 49.6107 & 38395.18 & 0.7263 \\
        9 & 22794 & 1139 & 11397 & 25 & 266.566 & 7844.93 & 1.5366 & 302.59 & 5229.61 & 0.4943 & 262.863 & 11901.06 & 0.9913 & 232.73 & 14964.96 & 1.9488 \\
        10 & 23476 & 1173 & 11738 & 10 & 492.033 & 3437.06 & 1.4334 & 527.189 & 2804.5 & 0.6025 & 462.834 & 8045.11 & 1.4809 & 422.81 & 11600.02 & 2.4291 \\
        11 & 23580 & 1179 & 11790 & 120 & 363.083 & 4192.15 & 0.9394 & 381.919 & 3616.16 & 0.4694 & 337.913 & 7705.03 & 0.9617 & 325.289 & 10764.32 & 1.5578 \\
        12 & 24010 & 1200 & 12005 & 50 & 141.853 & 9535.68 & 0.8262 & 146.363 & 7709.88 & 0.2482 & 140.207 & 13628.18 & 0.5286 & 134.122 & 11246.14 & 0.9512 \\
        13 & 24656 & 1232 & 12328 & 121 & 251.145 & 16694.0 & 1.4178 & 321.063 & 10478.06 & 0.4169 & 247.344 & 22044.62 & 1.0903 & 222.794 & 24964.76 & 1.978 \\
        14 & 24942 & 1247 & 872622 & 75355 & 81.1309 & 17736340 & 0.9063 & 92.955 & 3196086.0 & 0.1148 & 82.2677 & 4873080.0 & 0.309 & 75.4089 & 1843784.0 & 0.6323 \\
        15 & 25196 & 1259 & 12598 & 33 & 185.937 & 7710.13 & 1.2108 & 219.402 & 6284.92 & 0.4707 & 187.785 & 12027.92 & 0.9252 & 166.901 & 14449.38 & 1.9537 \\
        16 & 26628 & 1331 & 13332 & 2265 & 378.416 & 11178.3 & 0.6427 & 400.065 & 8507.19 & 0.2609 & 371.537 & 15223.34 & 0.5086 & 364.064 & 9659.76 & 0.812 \\
        17 & 26712 & 1335 & 13356 & 102 & 325.733 & 6319.12 & 1.053 & 342.499 & 3667.42 & 0.4701 & 301.547 & 9147.24 & 1.1082 & 287.222 & 10794.8 & 1.6548 \\
        18 & 27070 & 1353 & 13535 & 45 & 279.234 & 16505.0 & 1.5217 & 285.45 & 10686.1 & 0.5386 & 256.446 & 26253.96 & 1.1461 & 233.179 & 32938.62 & 2.3593 \\
        19 & 27382 & 1369 & 13682 & 16 & 38.785 & 60562.08 & 0.2975 & 42.0452 & 4346.54 & 0.0195 & 41.0081 & 10675.36 & 0.0638 & 39.4657 & 7348.85 & 0.1196 \\
        20 & 27930 & 1396 & 13965 & 81 & 53.7755 & 60389.92 & 0.9726 & 69.073 & 27803.2 & 0.227 & 56.5894 & 37932.16 & 0.352 & 49.4463 & 27013.06 & 0.7986 \\
        21 & 27984 & 1399 & 13992 & 71 & 110.164 & 23130.3 & 0.6693 & 110.897 & 15164.62 & 0.167 & 111.56 & 24851.78 & 0.6156 & 94.8106 & 19204.4 & 1.3084 \\
        22 & 28000 & 1400 & 14000 & 294 & 138.933 & 117946.0 & 1.7516 & 191.976 & 63926.84 & 0.1455 & 151.35 & 129075.0 & 0.4919 & 123.422 & 79639.46 & 1.0205 \\
        23 & 28428 & 1421 & 14214 & 23 & 218.824 & 8120.68 & 1.0419 & 240.805 & 7803.96 & 0.5428 & 209.888 & 16043.5 & 0.8477 & 197.064 & 13090.64 & 1.2028 \\
        24 & 28680 & 1434 & 14340 & 7229 & 121.762 & 661128.8 & 0.9397 & 133.866 & 720395.2 & 0.099 & 125.582 & 1190086.0 & 0.2348 & 118.329 & 96371.52 & 0.3708 \\
        25 & 29644 & 1482 & 14822 & 50 & 74.034 & 41130.12 & 0.705 & 76.7424 & 17677.26 & 0.1218 & 72.958 & 36337.02 & 0.413 & 68.5249 & 27149.96 & 1.0737 \\
        26 & 30480 & 1524 & 72600 & 20 & 143.352 & 68947.54 & 0.0974 & 141.252 & 8486.27 & 0.0736 & 141.086 & 26988.5 & 0.1085 & 141.072 & 44504.42 & 0.2437 \\
        27 & 30874 & 1543 & 19321 & 63 & 446.712 & 7778.8 & 1.5377 & 543.375 & 6081.94 & 0.5825 & 463.415 & 13707.8 & 1.1299 & 399.93 & 19673.24 & 2.2271 \\
        28 & 30898 & 1544 & 15449 & 243 & 30.6138 & 186241.6 & 0.446 & 36.5083 & 39865.64 & 0.0538 & 31.5938 & 56368.06 & 0.1527 & 28.1611 & 24032.02 & 0.3232 \\
        29 & 32092 & 1604 & 16706 & 360 & 600.745 & 7483.63 & 0.853 & 598.463 & 5605.71 & 0.3616 & 570.373 & 13116.2 & 0.8328 & 546.345 & 16094.64 & 1.4241 \\
        30 & 32292 & 1614 & 16146 & 89 & 62.9686 & 53537.52 & 0.6633 & 71.9663 & 20541.48 & 0.1279 & 64.2677 & 37139.38 & 0.3511 & 58.7613 & 30429.48 & 0.9428 \\
        31 & 33566 & 1678 & 16783 & 14672 & 4.5682 & 6224774.0 & 0.1951 & 3.3493 & 208021.8 & 0.0162 & 4.8707 & 421492.8 & 0.0243 & 3.9648 & 42573.1 & 0.0275 \\
        32 & 34444 & 1722 & 17222 & 19 & 305.2 & 7201.02 & 0.9005 & 401.459 & 5170.08 & 0.4385 & 316.504 & 9731.86 & 0.8177 & 283.107 & 11925.94 & 1.4998 \\
        33 & 34722 & 1736 & 17361 & 360 & 54.6294 & 72249.52 & 0.4594 & 54.4614 & 19281.2 & 0.0552 & 54.5105 & 32236.98 & 0.268 & 49.874 & 22617.64 & 0.7517 \\
        34 & 40000 & 2000 & 20000 & 29 & 232.307 & 28545.6 & 2.0232 & 300.394 & 17879.46 & 0.4068 & 232.741 & 27688.12 & 0.7719 & 200.726 & 37413.16 & 1.7052 \\
        35 & 41792 & 2089 & 20896 & 12 & 996.075 & 6902.75 & 1.8538 & 1148.0 & 7179.37 & 0.6057 & 999.954 & 15643.24 & 1.4432 & 893.392 & 21968.76 & 2.8161 \\
        36 & 42072 & 2103 & 21036 & 28 & 800.972 & 10293.52 & 1.6462 & 895.052 & 8660.18 & 0.5157 & 794.798 & 19269.38 & 1.2963 & 697.87 & 27277.34 & 2.3827 \\
        37 & 45120 & 2256 & 22560 & 112 & 123.345 & 32152.92 & 0.4948 & 128.751 & 23294.42 & 0.1341 & 120.369 & 32639.78 & 0.4695 & 113.73 & 22213.24 & 0.8515 \\
        38 & 45390 & 2269 & 22695 & 108 & 131.388 & 49422.1 & 1.5047 & 168.574 & 31148.32 & 0.3229 & 137.817 & 47887.34 & 0.6042 & 117.817 & 36055.26 & 1.2281 \\
        39 & 46824 & 2341 & 23412 & 15 & 499.27 & 6499.2 & 1.4433 & 502.693 & 6838.26 & 0.7461 & 461.733 & 12886.74 & 1.2276 & 415.601 & 17960.76 & 2.2906 \\
        40 & 47120 & 2356 & 23560 & 21 & 268.884 & 13449.04 & 1.4885 & 322.396 & 11399.88 & 0.534 & 251.649 & 20568.96 & 0.938 & 228.249 & 24901.46 & 1.8933 \\
        \bottomrule
    \end{tabularx}
\end{sidewaystable}

\begin{sidewaystable}[p]
    \centering
    \footnotesize
    \caption{Statistics and results for the second 40 instances. All results are for the chosen $\beta$. $s$, $t$ and $r$ represent the amortized set cover size, update time and recourse, respectively. The subscripts rob, loc, par and glo represent the robust, local, partial and global algorithms, respectively.}
    \label{table2}
    \begin{tabularx}{\linewidth}{@{} l *{16}{Y} @{}}
        \toprule
        Ins. & $k$ & $n$ & $m$ & $f$ & $s_{\mathsf{rob}}$ & $t_{\mathsf{rob}}$(ns) & $r_{\mathsf{rob}}$ & $s_{\mathsf{loc}}$ & $t_{\mathsf{loc}}$(ns) & $r_{\mathsf{loc}}$ & $s_{\mathsf{par}}$ & $t_{\mathsf{par}}$(ns) & $r_{\mathsf{par}}$ & $s_{\mathsf{glo}}$ & $t_{\mathsf{glo}}$(ns) & $r_{\mathsf{glo}}$ \\ 
        \midrule
        41 & 49082 & 2454 & 27240 & 562 & 625.902 & 9307.64 & 0.6807 & 660.548 & 6924.21 & 0.2466 & 617.088 & 11296.98 & 0.5046 & 595.109 & 11585.72 & 0.8174 \\
        42 & 49684 & 2484 & 24842 & 181 & 815.195 & 3815.55 & 1.0488 & 769.687 & 3039.94 & 0.4521 & 746.872 & 8502.3 & 1.2439 & 711.249 & 14896.42 & 2.4125 \\
        43 & 50374 & 2518 & 25187 & 22767 & 174.51 & 43786.98 & 0.2217 & 206.162 & 3400.64 & 0.0931 & 180.249 & 7298.35 & 0.1745 & 174.105 & 8408.98 & 0.2884 \\
        44 & 55386 & 2769 & 67593 & 15 & 1399.49 & 5260.27 & 1.9145 & 2337.1 & 3612.38 & 0.8711 & 1412.34 & 10485.92 & 1.5539 & 1200.35 & 21386.36 & 2.6747 \\
        45 & 58134 & 2906 & 29067 & 345 & 314.489 & 68813.42 & 1.1207 & 335.838 & 34598.06 & 0.0785 & 318.418 & 84512.86 & 0.4671 & 290.921 & 41045.6 & 0.8276 \\
        46 & 58564 & 2928 & 29282 & 18 & 686.571 & 7359.38 & 1.1924 & 640.005 & 6167.8 & 0.6086 & 635.622 & 15335.74 & 1.252 & 583.915 & 21675.04 & 2.3749 \\
        47 & 59840 & 2992 & 29920 & 181 & 319.474 & 156509.8 & 0.4433 & 294.381 & 26572.56 & 0.1382 & 290.356 & 58197.24 & 0.3901 & 288.642 & 40787.36 & 0.8214 \\
        48 & 59914 & 2995 & 29957 & 154 & 452.936 & 45564.02 & 1.3872 & 481.115 & 28505.94 & 0.3979 & 428.234 & 63750.36 & 0.9428 & 386.377 & 62983.1 & 1.9834 \\
        49 & 59990 & 2999 & 29995 & 8 & 719.544 & 3548.32 & 0.9202 & 744.789 & 3311.07 & 0.678 & 647.689 & 8380.51 & 1.306 & 573.906 & 14355.38 & 2.4728 \\
        50 & 60324 & 3016 & 27944 & 120 & 372.29 & 20323.88 & 1.4106 & 473.213 & 14440.36 & 0.2528 & 382.513 & 27533.26 & 0.5279 & 344.009 & 19768.64 & 1.1143 \\
        51 & 61596 & 3079 & 30798 & 15054 & 1085.05 & 2536456.0 & 0.9414 & 1465.63 & 20733.58 & 0.4965 & 1074.85 & 34748.68 & 0.8537 & 925.56 & 14836.0 & 1.497 \\
        52 & 65538 & 3276 & 32769 & 73 & 611.819 & 8467.07 & 0.983 & 649.748 & 5053.44 & 0.2886 & 623.542 & 12727.86 & 0.7041 & 553.48 & 19463.96 & 1.3422 \\
        53 & 69840 & 3492 & 34920 & 438 & 93.2377 & 177874.0 & 0.4484 & 98.4801 & 52920.5 & 0.0541 & 96.4026 & 105585.8 & 0.2169 & 87.2432 & 55472.94 & 0.6411 \\
        54 & 72000 & 3600 & 36000 & 519 & 83.1629 & 677449.4 & 0.7467 & 98.2479 & 190210.8 & 0.0805 & 83.2803 & 494395.6 & 0.2549 & 74.4111 & 167191.8 & 0.6039 \\
        55 & 72834 & 3641 & 36417 & 204 & 92.1224 & 143161.6 & 0.5019 & 94.7967 & 49827.3 & 0.098 & 93.3368 & 102867.6 & 0.3338 & 84.0223 & 52800.12 & 0.8015 \\
        56 & 72882 & 3644 & 36441 & 21 & 430.192 & 9133.73 & 1.3889 & 550.951 & 7180.16 & 0.3853 & 427.316 & 16499.78 & 0.9336 & 386.526 & 19152.56 & 1.5548 \\
        57 & 75524 & 3776 & 37762 & 42 & 230.075 & 21419.74 & 1.0404 & 228.548 & 11830.38 & 0.2012 & 228.978 & 27324.56 & 0.5877 & 210.938 & 24545.64 & 1.2327 \\
        58 & 76830 & 3841 & 38415 & 25 & 370.753 & 14288.6 & 1.353 & 384.167 & 12169.92 & 0.8967 & 372.303 & 24290.32 & 1.1529 & 331.347 & 26618.62 & 1.6782 \\
        59 & 77488 & 3874 & 38744 & 126 & 338.292 & 24759.72 & 1.2871 & 351.147 & 20005.28 & 0.3723 & 312.861 & 35750.32 & 0.7343 & 285.503 & 31002.66 & 1.6101 \\
        60 & 80432 & 4021 & 40216 & 20024 & 1416.94 & 2139940.0 & 0.7985 & 1936.55 & 14913.24 & 0.499 & 1401.5 & 24006.06 & 0.8433 & 1210.61 & 22775.22 & 1.5658 \\
        61 & 91816 & 4590 & 59498 & 155 & 2119.28 & 11169.46 & 1.2688 & 1997.2 & 14480.72 & 0.6008 & 1832.86 & 28294.0 & 1.3832 & 1790.24 & 38953.84 & 1.9314 \\
        62 & 93670 & 4683 & 46835 & 145 & 280.704 & 41693.28 & 0.6525 & 293.8 & 34259.32 & 0.1286 & 275.506 & 56622.5 & 0.6023 & 253.137 & 52783.56 & 1.6431 \\
        63 & 98304 & 4915 & 49152 & 39 & 322.384 & 36298.34 & 1.0097 & 382.34 & 30151.68 & 0.2108 & 335.429 & 48212.82 & 0.5105 & 292.848 & 52119.24 & 1.3681 \\
        64 & 99332 & 4966 & 49666 & 13 & 1411.68 & 11181.86 & 1.6046 & 1356.6 & 12454.44 & 0.9019 & 1205.61 & 24501.94 & 1.5696 & 1127.74 & 36975.72 & 2.6239 \\
        65 & 100000 & 5000 & 50000 & 56 & 1429.04 & 26476.1 & 1.8846 & 1765.89 & 31928.68 & 0.3916 & 1532.19 & 54616.56 & 1.0679 & 1254.71 & 64572.92 & 2.3124 \\
        66 & 103986 & 5199 & 51993 & 23 & 2244.17 & 5764.94 & 1.2888 & 2092.49 & 4778.71 & 0.7251 & 1936.21 & 13221.28 & 1.23 & 1857.95 & 24396.42 & 2.0975 \\
        67 & 108038 & 5401 & 54019 & 1426 & 608.248 & 10551.14 & 0.8116 & 702.365 & 8952.75 & 0.2926 & 612.864 & 17900.26 & 0.7101 & 551.865 & 28306.18 & 1.8233 \\
        68 & 109740 & 5487 & 54870 & 276 & 246.101 & 40571.08 & 0.7167 & 313.446 & 19827.44 & 0.2365 & 255.068 & 58393.98 & 0.5453 & 218.236 & 39506.86 & 1.2134 \\
        69 & 110642 & 5532 & 65536 & 17998 & 821.432 & 59605.04 & 0.2977 & 851.236 & 58062.8 & 0.1307 & 817.135 & 149445.6 & 0.2035 & 808.524 & 28235.44 & 0.278 \\
        70 & 120000 & 6000 & 60000 & 36 & 2507.76 & 10453.78 & 1.7553 & 2940.96 & 11179.58 & 0.5699 & 2549.41 & 24881.2 & 1.3139 & 2208.49 & 40767.56 & 2.5131 \\
        71 & 121480 & 6074 & 60740 & 63 & 476.257 & 21366.28 & 1.0353 & 583.219 & 20764.12 & 0.3997 & 502.888 & 37234.38 & 0.6978 & 441.649 & 48738.72 & 1.678 \\
        72 & 124848 & 6242 & 62424 & 44 & 462.051 & 17984.1 & 0.6377 & 440.016 & 12976.92 & 0.2227 & 459.695 & 35121.7 & 0.5946 & 424.831 & 34267.44 & 1.4813 \\
        73 & 130050 & 6502 & 65025 & 25 & 1680.17 & 11666.68 & 1.0524 & 1638.76 & 8729.21 & 0.4631 & 1576.03 & 30429.82 & 1.282 & 1459.01 & 40958.36 & 2.2551 \\
        74 & 131072 & 6553 & 147266 & 9 & 2403.97 & 6650.09 & 1.4349 & 2600.77 & 6507.06 & 0.5834 & 2436.47 & 18177.28 & 1.3217 & 1958.68 & 46429.1 & 2.8214 \\
        75 & 132254 & 6612 & 66127 & 27 & 649.921 & 16117.12 & 1.5108 & 734.344 & 14118.62 & 0.7324 & 645.14 & 29710.46 & 0.9834 & 571.841 & 37382.3 & 1.9701 \\
        76 & 134048 & 6702 & 67024 & 37 & 695.539 & 29899.42 & 1.8317 & 866.885 & 25726.96 & 0.3843 & 687.702 & 41471.36 & 0.7647 & 585.256 & 55100.22 & 1.7964 \\
        77 & 134346 & 6717 & 67173 & 89 & 1825.25 & 20281.18 & 0.7654 & 1763.83 & 39187.32 & 0.556 & 1669.66 & 64928.06 & 0.6576 & 1627.93 & 32812.94 & 0.8759 \\
        78 & 136242 & 6812 & 68121 & 68121 & 641.157 & 58528.46 & 1.428 & 712.806 & 53892.58 & 0.8876 & 644.97 & 80256.3 & 1.0534 & 569.416 & 52729.84 & 1.7393 \\
        79 & 138470 & 6923 & 132402 & 12 & 2210.99 & 10213.44 & 1.8489 & 2293.49 & 9309.22 & 0.9065 & 1922.55 & 24556.08 & 1.5748 & 1716.88 & 50530.36 & 2.9107 \\
        80 & 141312 & 7065 & 70656 & 33 & 690.723 & 17319.5 & 1.4105 & 793.481 & 15555.84 & 0.5777 & 689.886 & 35562.68 & 0.8948 & 607.741 & 40741.6 & 1.8693 \\

        \bottomrule
    \end{tabularx}
\end{sidewaystable}

\begin{sidewaystable}[p]
    \centering
    \footnotesize
    \caption{Statistics and results for the last 40 instances. All results are for the chosen $\beta$. $s$, $t$ and $r$ represent the amortized set cover size, update time and recourse, respectively. The subscripts rob, loc, par and glo represent the robust, local, partial and global algorithms, respectively.}
    \label{table3}
    \begin{tabularx}{\linewidth}{@{} l *{16}{Y} @{}}
        \toprule
        Ins. & $k$ & $n$ & $m$ & $f$ & $s_{\mathsf{rob}}$ & $t_{\mathsf{rob}}$(ns) & $r_{\mathsf{rob}}$ & $s_{\mathsf{loc}}$ & $t_{\mathsf{loc}}$(ns) & $r_{\mathsf{loc}}$ & $s_{\mathsf{par}}$ & $t_{\mathsf{par}}$(ns) & $r_{\mathsf{par}}$ & $s_{\mathsf{glo}}$ & $t_{\mathsf{glo}}$(ns) & $r_{\mathsf{glo}}$ \\ 
        \midrule
        81 & 143010 & 7150 & 71505 & 345 & 406.119 & 86216.38 & 0.5872 & 422.233 & 41982.32 & 0.085 & 413.092 & 101455.2 & 0.428 & 365.102 & 75259.84 & 1.2086 \\
        82 & 148124 & 7406 & 74062 & 565 & 1159.82 & 13134.08 & 1.0639 & 1449.92 & 15502.24 & 0.3562 & 1174.96 & 32477.86 & 0.8353 & 1047.17 & 33857.38 & 1.7232 \\
        83 & 148208 & 7410 & 74104 & 271 & 3087.19 & 4854.94 & 1.5042 & 3737.19 & 4993.87 & 0.8528 & 2760.14 & 13686.02 & 1.4677 & 2597.86 & 27076.82 & 2.3288 \\
        84 & 149504 & 7475 & 74752 & 55 & 1369.94 & 8009.24 & 0.7135 & 1507.18 & 5451.76 & 0.3946 & 1256.91 & 15235.08 & 0.8653 & 1224.47 & 28892.46 & 1.7676 \\
        85 & 165308 & 8265 & 82654 & 11 & 2232.74 & 12981.92 & 1.0016 & 2128.83 & 20326.32 & 0.4931 & 2082.32 & 33490.1 & 1.2002 & 1906.97 & 49537.1 & 2.11 \\
        86 & 171246 & 8562 & 85623 & 145 & 1297.64 & 40428.1 & 1.0856 & 1447.73 & 60738.0 & 0.1638 & 1317.96 & 114105.2 & 0.5942 & 1187.86 & 75020.18 & 1.1062 \\
        87 & 175608 & 8780 & 87804 & 132 & 392.276 & 58128.84 & 0.5105 & 405.038 & 73533.52 & 0.078 & 391.8 & 100322.66 & 0.3123 & 372.918 & 69685.8 & 1.0038 \\
        88 & 178800 & 8940 & 89400 & 13 & 1513.22 & 17057.12 & 1.429 & 1589.41 & 30912.8 & 0.6426 & 1402.72 & 47089.74 & 1.1074 & 1226.39 & 59792.66 & 1.9877 \\
        89 & 180898 & 9044 & 90449 & 42 & 422.812 & 40805.1 & 0.5694 & 453.53 & 49843.22 & 0.1519 & 429.618 & 90117.7 & 0.3927 & 399.007 & 70889.02 & 1.1179 \\
        90 & 202984 & 10149 & 101492 & 31 & 1421.63 & 24747.46 & 1.0418 & 1621.42 & 40231.24 & 0.3821 & 1372.5 & 66061.54 & 0.9159 & 1248.71 & 76300.0 & 1.9331 \\
        91 & 204316 & 10215 & 102158 & 10 & 4840.37 & 18862.28 & 1.5964 & 5327.28 & 32273.54 & 0.5937 & 4686.8 & 53719.0 & 1.4096 & 4230.74 & 71840.14 & 2.5552 \\
        92 & 226152 & 11307 & 113076 & 7031 & 2006.37 & 27508.56 & 1.2961 & 2040.4 & 35498.72 & 0.4657 & 1868.42 & 98471.7 & 1.1365 & 1689.57 & 79854.74 & 2.3881 \\
        93 & 233670 & 11683 & 116835 & 114190 & 224.66 & 59180.78 & 0.0748 & 223.176 & 7584.04 & 0.0228 & 223.176 & 16260.32 & 0.0492 & 223.176 & 32563.5 & 0.0578 \\
        94 & 306452 & 15322 & 153226 & 5776 & 2555.71 & 30755.04 & 1.2261 & 2643.92 & 45486.6 & 0.4635 & 2375.3 & 126643.2 & 1.1055 & 2157.53 & 109772.8 & 2.484 \\
        95 & 307492 & 15374 & 153746 & 99 & 533.182 & 60106.94 & 0.6453 & 608.962 & 51563.38 & 0.1476 & 535.917 & 136503.0 & 0.4574 & 478.42 & 75068.42 & 1.1144 \\
        96 & 314928 & 15746 & 157464 & 27 & 1383.81 & 27096.6 & 0.5701 & 1455.32 & 38598.02 & 0.1872 & 1333.02 & 74364.4 & 0.6482 & 1252.87 & 78422.9 & 1.6681 \\
        97 & 320000 & 16000 & 160000 & 11 & 3620.89 & 29345.86 & 1.4398 & 3595.41 & 45234.46 & 0.8537 & 3141.74 & 76200.16 & 1.4134 & 2850.87 & 101504.0 & 2.5087 \\
        98 & 338820 & 16941 & 169410 & 719 & 1702.34 & 64134.36 & 0.9939 & 1943.63 & 100789.38 & 0.196 & 1715.91 & 153298.0 & 0.6151 & 1526.45 & 136565.8 & 1.2634 \\
        99 & 365460 & 18273 & 182730 & 11 & 5950.63 & 43324.82 & 1.1927 & 7495.71 & 71717.64 & 0.6596 & 5777.74 & 97580.34 & 1.2726 & 5032.84 & 122781.8 & 2.5065 \\
        100 & 429530 & 21476 & 214765 & 40 & 4908.73 & 20879.36 & 1.348 & 5297.32 & 29628.34 & 0.4594 & 4771.41 & 55856.74 & 1.0863 & 4178.31 & 93535.2 & 2.1648 \\
        101 & 491748 & 24587 & 245874 & 54 & 846.656 & 53518.04 & 0.4417 & 1431.22 & 42838.78 & 0.1703 & 949.071 & 104951.8 & 0.3875 & 812.423 & 109425.8 & 1.0556 \\
        102 & 518312 & 25915 & 259156 & 22 & 3982.51 & 18030.02 & 1.4096 & 4228.33 & 25088.36 & 0.5715 & 3520.92 & 74235.9 & 1.0992 & 3216.36 & 98662.32 & 2.1621 \\
        103 & 524282 & 26214 & 262144 & 31 & 4703.25 & 12694.24 & 0.8807 & 4564.69 & 20417.16 & 0.5115 & 4324.13 & 57335.86 & 1.0904 & 3908.63 & 90050.3 & 2.1197 \\
        104 & 584016 & 29200 & 292008 & 60 & 2500.24 & 29418.26 & 1.1636 & 2655.77 & 38905.58 & 0.2292 & 2510.88 & 72323.46 & 0.5646 & 2194.32 & 101882.8 & 1.3423 \\
        105 & 607234 & 30361 & 896308 & 1334 & 10866.3 & 34005.82 & 0.9041 & 11630.8 & 32709.22 & 0.3754 & 10554.3 & 62779.84 & 0.7798 & 10200.4 & 177782.2 & 1.3666 \\
        106 & 654124 & 32706 & 37830 & 19 & 29.4237 & 923016.8 & 0.0079 & 28.0354 & 104180.6 & 0.0005 & 29.3257 & 138524.2 & 0.0045 & 29.27 & 101750.0 & 0.0065 \\
        107 & 698886 & 34944 & 822922 & 403 & 11837.0 & 138365.4 & 2.0057 & 14988.6 & 200589.2 & 0.7014 & 12752.4 & 363747.2 & 1.244 & 9928.8 & 500255.2 & 2.7267 \\
        108 & 763378 & 38168 & 381689 & 295 & 11514.6 & 160202.6 & 0.4742 & 10628.2 & 351920.6 & 0.3616 & 10577.1 & 394392.6 & 0.4078 & 10662.6 & 306052.4 & 0.5432 \\
        109 & 831726 & 41586 & 415863 & 77 & 2356.02 & 163562.0 & 0.4565 & 2519.89 & 343490.8 & 0.0811 & 2376.92 & 385546.4 & 0.3614 & 2205.85 & 349216.2 & 1.2871 \\
        110 & 1007250 & 50362 & 503625 & 35 & 2807.68 & 38505.8 & 0.365 & 2363.44 & 62161.28 & 0.1405 & 2686.76 & 126599.2 & 0.4395 & 2506.75 & 178872.6 & 1.4029 \\
        111 & 1214464 & 60723 & 607232 & 158 & 3883.13 & 68688.88 & 0.7889 & 4171.96 & 93602.56 & 0.1907 & 3829.85 & 213712.2 & 0.5508 & 3494.76 & 255071.2 & 1.4647 \\
        112 & 1318066 & 65903 & 659033 & 628 & 7089.42 & 78074.06 & 0.2974 & 7198.15 & 283577.6 & 0.1574 & 6820.6 & 276626.8 & 0.3874 & 6430.3 & 390284.8 & 1.0178 \\
        113 & 1846272 & 92313 & 923136 & 57 & 4814.83 & 91937.32 & 0.8403 & 5055.76 & 117038.4 & 0.1857 & 4927.83 & 337349.6 & 0.4967 & 4478.67 & 381443.8 & 1.3391 \\
        114 & 1980004 & 99000 & 41270 & 2497 & 3136.39 & 46928.92 & 0.0374 & 3184.53 & 81720.74 & 0.028 & 3141.58 & 129513.2 & 0.0335 & 3133.94 & 123934.0 & 0.0402 \\
        115 & 2279810 & 113990 & 1139905 & 11468 & 11760.4 & 204946.0 & 0.572 & 12206.8 & 246626.6 & 0.156 & 11467.0 & 372222.6 & 0.4568 & 10944.6 & 447341.4 & 1.0527 \\
        116 & 2930274 & 146513 & 1465137 & 189 & 23801.2 & 872151.0 & 1.251 & 27701.6 & 2266110.0 & 0.5412 & 23256.5 & 3230640.0 & 1.0101 & 20491.3 & 3751330.0 & 2.3112 \\
        117 & 6692166 & 334608 & 49677 & 145 & 6802.45 & 2500250.0 & 0.1065 & 7221.99 & 7543340.0 & 0.0135 & 6932.33 & 6742320.0 & 0.0245 & 6833.71 & 3588050.0 & 0.0351 \\
        118 & 7084800 & 354240 & 3542400 & 28 & 32240.8 & 199516.0 & 1.479 & 31485.8 & 955331.0 & 0.808 & 29769.9 & 1562670.0 & 1.1292 & 27116.9 & 2427550.0 & 1.9949 \\
        119 & 8570726 & 428536 & 2268264 & 9350 & 59481.3 & 221688.0 & 0.8209 & 70195.9 & 1346220.0 & 0.1435 & 62150.6 & 1610040.0 & 0.3757 & 58963.1 & 1563890.0 & 0.5907 \\
        120 & 13840612 & 692030 & 9845725 & 3841 & 109517.0 & 5769920.0 & 0.2968 & OOT & OOT & OOT & OOT & OOT & OOT & OOT & OOT & OOT \\
        \bottomrule
    \end{tabularx}
\end{sidewaystable}